\documentclass[12pt]{article}

\usepackage{enumerate}
\usepackage{amsmath}
\usepackage{amssymb}
\usepackage{graphicx}
\usepackage{xcolor}
\usepackage{booktabs}
\usepackage{adjustbox}
\usepackage{threeparttable}
\usepackage{threeparttablex}
\usepackage{array}
\usepackage{etoolbox}
\usepackage{natbib} 
\setcitestyle{authoryear,maxnames=2,minnames=1}
\usepackage[linesnumbered,ruled,vlined]{algorithm2e}
\usepackage{amsthm}
\usepackage{url}

\usepackage[utf8]{inputenc}
\usepackage{textcomp}
\usepackage{dsfont}
\usepackage{array}
\usepackage{newtxtext}
\usepackage{newtxmath}
\usepackage{anyfontsize}
\usepackage{makecell}
\usepackage[ruled,vlined,linesnumbered]{algorithm2e}
\usepackage[colorlinks=true,linkcolor=blue,citecolor=blue,urlcolor=blue]{hyperref} 
\usepackage{authblk} 
\newtheorem{assumption}{Assumption}
\newtheorem{lemma}{Lemma}
\newtheorem{definition}{Definition}
\newtheorem{theorem}{Theorem}
\newtheorem{example}{Example}

\newtheorem{remark}{Remark}

\begin{document}

\def\spacingset#1{\renewcommand{\baselinestretch}%
{#1}\small\normalsize} \spacingset{1}


\title{\bf A Conformal Selection Framework for Individual Treatment Beneficiaries with Auxiliary External Data}

\author[1]{Jiajun Liu}
\author[1,2]{Ke Zhu}
\author[1]{Xiaofei Wang\footnote{Address for correspondence: Xiaofei Wang, Department of Biostatistics and Bioinformatics, Duke University, Durham, NC 27710, U.S.A. Email: xiaofei.wang@duke.edu}}

\affil[1]{\small Department of Biostatistics and Bioinformatics, Duke University, Durham, NC 27710, U.S.A.}

\affil[2]{\small Department of Statistics, North Carolina State University, Raleigh, NC 27695, U.S.A.}

\date{}
\maketitle

\begin{abstract}
  Identifying patients who are likely to benefit from a treatment is an important goal in precision medicine and can help guide follow-up trials, enrichment designs, and individualized treatment decisions. Although randomized controlled trials (RCTs) provide reliable evidence about treatment efficacy, they are usually powered to estimate average treatment effects rather than patient-level benefit. Meanwhile, artificial intelligence and machine learning methods offer flexible tools for estimating heterogeneous treatment effects, especially when augmented by external data sources such as real-world data (RWD). However, in practice, these estimated effects are often translated into decisions through simple ranking or thresholding rules, which can ignore uncertainty in treatment effect estimation and the multiplicity that arises when many candidate patients are evaluated simultaneously. Motivated by this, we propose a model-agnostic conformal inference framework for uncertainty-aware beneficiary selection. The framework reformulates CATE-based treatment-benefit selection as a multiple-testing problem. For each candidate, we test whether the conditional treatment benefit exceeds a clinically meaningful threshold and construct a conformal $p$-value using RCT-based calibration. These $p$-values are then adjusted by the Benjamini-Hochberg procedure to control the false discovery rate (FDR) among selected beneficiaries. To improve efficiency when RCT sample sizes are limited, external data (e.g. RWD) can be used to train flexible treatment effect models, while conformal calibration remains anchored in the RCT data. Because the framework is model-agnostic, it can be paired with a broad class of learners, including conventional machine learning algorithms and emerging tabular foundation models. Simulations show that the framework maintains FDR control, with power depending on the base learner and external-data comparability. A case study in early-stage non-small-cell lung cancer illustrates how the method can identify candidate profiles with evidence of benefit from limited resection to reduce overtreatment.

\end{abstract}

\bigskip
\noindent%
\small{{\it Keywords:} Conformal prediction; External data; Individual treatment effect; Precision medicine; False discovery rate control.}
\vfill

\newpage
\spacingset{1.9} 

\section{Introduction}\label{sec:introduction}
Precision medicine aims to identify which patients are likely to benefit from a treatment and to use this information to guide individualized treatment decisions, follow-up trials, enrichment designs, and other targeted study designs \citep{FDA2025PrecisionMedicine}. Randomized controlled trials (RCTs) remain the gold standard for evaluating treatment effectiveness in a target population \citep{Deaton2018}. However, RCTs are usually designed and powered to estimate average treatment effects rather than patient-level treatment benefit \citep{Kontopantelis2018}. In many clinical settings, the average effect does not tell the full story. A treatment may be beneficial on average, while some patients receive little benefit or may even be harmed. Conversely, even when the overall average treatment effect (ATE) is not statistically significant, clinically meaningful benefit may still exist for certain patients \citep{Baum2017LookAHEAD,Inoue2024MedicaidHeterogeneous}. Therefore, it is important to consider approaches that move beyond overall treatment efficacy and toward identifying patients with evident individual-level benefit.

This goal has both scientific and ethical importance. If individualized or heterogeneous treatment effects are ignored, potential beneficiaries may be prevented from accessing effective therapies, whereas non-beneficiaries may be exposed to unnecessary risks, burdens, and costs \citep{Dahabreh2016UsingGroupData,Kent2018,Kontopantelis2018,kent_predictive_2020,angus_heterogeneity_2021,Inoue2024MLHTE}. With RCTs serving as an important source of reference evidence \citep{Kent2018}, precision medicine aims to support individualized treatment decisions by identifying which patients are more likely to benefit from a given therapy. In this paper, we focus on the effect modeling paradigm for treatment benefit assessment. In contrast to risk modeling, which asks ``Who is high-risk?'' based on baseline prognosis, effect modeling asks ``Who benefits more from treatment?'' by directly targeting treatment effect heterogeneity \citep{Kent2018}.

A concrete example arises in early-stage non-small-cell lung cancer (NSCLC), where recent randomized trials have shown that limited resection can be non-inferior to lobectomy for selected patients with small, peripheral, node-negative tumors \citep{saji2022segmentectomy, Altorki2023, Altorki2024}. This population-level conclusion, however, does not imply that the two procedures are interchangeable for every patient. Some patients may benefit more from the less invasive procedure, whereas others may still benefit from standard lobectomy. This clinical setting illustrates the need to move from average treatment comparisons to uncertainty-aware, patient-level treatment-benefit selection. We return to this example in the case study.

An extensive literature has developed for estimating heterogeneous treatment effects, most commonly through the conditional average treatment effect (CATE), \(\tau(x)=E\{Y(1)-Y(0)\mid X=x\}\). Early work primarily modeled treatment-covariate interactions through outcome regression, often referred to as regression-based effect modeling \citep{GailSimon1985,Robins1986,YusufWittesProbstfieldTyroler1991}. More recent approaches use flexible statistical and machine learning methods, including penalized regression for high-dimensional covariates \citep{Imai2013,Du2021}, causal forests \citep{Wager2018,Seibold2018,Doubleday2022}, Bayesian causal forests \citep{Shen2016,Hahn2020}, meta-learners such as the S-learner \citep{Hill2020,Montoya2021,Conzuelo2021}, T-learner \citep{Athey2016}, X-learner and doubly robust learner \citep{Kunzel2019,Kennedy2023}, and R-learner \citep{Nie2021}, as well as tree-based and nonparametric methods such as causal trees, causal forests, and Bayesian additive regression trees \citep{Chipman2010BART,Athey2016,Wager2018}; see \citet{Inoue2024MLHTE} for a comprehensive overview.
Although these methods have greatly improved our ability to estimate treatment effect heterogeneity, accurate CATE prediction alone does not fully solve the beneficiary selection problem. In practice, predicted CATEs are often translated into treatment decisions by ranking patients or applying a fixed threshold. This strategy is simple, but it treats predicted effects as known without error and ignores the multiplicity that arises when many candidate patients are evaluated simultaneously \citep{kent_predictive_2020,Inoue2024MLHTE}. As a result, patients may be selected as likely beneficiaries based only on point estimates, and the proportion of falsely selected non-beneficiaries may be inflated. Therefore, CATE-based prediction should be connected to an inferential rule that assesses whether there is sufficient evidence of individual-level treatment benefit while controlling false selections. 

Motivated by this gap, we propose a conformal selection framework that reframes CATE-based beneficiary identification as an uncertainty-aware individual-level testing problem. For each candidate patient, we test whether the individual treatment benefit exceeds a clinically meaningful threshold. In many applications, this threshold may be zero, corresponding to any positive treatment benefit; in other settings, it may represent a minimum clinically important benefit. Rather than selecting patients solely by ranking or thresholding predicted treatment effects, we construct conformal $p$-values to measure the evidence for individualized treatment benefit. This testing formulation allows beneficiary identification to be based on patient-level evidence rather than point estimates alone. 
This use of conformal inference is well aligned with the needs of individual-level beneficiary selection. Conformal inference provides a way to assess patient-level evidence while maintaining finite-sample validity under exchangeability and avoiding parametric distributional assumptions \citep{shafer2008tutorial,lei2021conformal}. Because it is model-agnostic, the proposed framework can be adapted to a broad class of base learners, including conventional machine learning methods and emerging tabular foundation models such as TabPFN and TabICL \citep{zhang2025tabpfnmodelruleall,hollmann2025accurate,qu2025tabicl}. This flexibility is especially useful when treatment effect patterns are nonlinear, high-dimensional, or difficult to specify in advance. At the same time, instead of relying on large-sample approximations to the sampling distribution, conformal inference evaluates candidate patients against the empirical distribution of calibration scores. Through conformal $p$-values, this comparison provides an interpretable measure of evidence for individualized treatment benefit. Therefore, in contrast to ad hoc thresholding of CATE estimates, the proposed conformal-based testing formulation provides an explicit measure of uncertainty, allowing clinicians to ask not only ``who appears likely to benefit,'' but also ``how strong the evidence is to declare benefit.''

Our approach builds on recent work in conformal inference for causal and counterfactual quantities. \citet{lei2021conformal} developed conformal methods for counterfactual outcomes and individualized treatment effects (ITEs), and \citet{alaa2024conformal} proposed conformal meta-learners for constructing model-agnostic prediction intervals for ITEs. Related work has studied sensitivity analysis for ITEs under unmeasured confounding \citep{jin2023sensitivity}. Separately, \citet{Jin2023SelectionbyConP} introduced conformal $p$-values for selection problems with false discovery rate (FDR) control, providing a general foundation for error-controlled screening. More broadly, conformal inference has been increasingly used in treatment effect and counterfactual problems \citep{Cai2024,Chen2024Conformal,Jonkers2025,gao2025rolesurrogatesconformalinference,wang2025conformalinferenceindividualtreatment,liu2025hybridControl}, as summarized in a recent systematic review \citep{memmesheimer2025}. 
However, existing work does not directly address the beneficiary selection problem considered here. Conformal selection methods have primarily focused on observed outcomes, whereas precision medicine decisions often depend on treatment contrasts and individualized treatment benefit. At the same time, conformal methods for ITEs have largely focused on prediction intervals rather than on selecting beneficiaries with error-rate control. This distinction matters because the clinical question is usually not whether an outcome will be large or small, but whether a patient has sufficient evidence of treatment benefit relative to a clinically meaningful threshold. Therefore, a conformal testing procedure tailored to individual-level beneficiary selection is needed.

Two additional challenges arise in this setting. First, because beneficiary identification typically involves screening a group of candidate patients rather than testing a single individual, false selections can accumulate even when each candidate-level test is valid marginally \citep{Ranganathan2016}. We therefore use the Benjamini-Hochberg (BH) procedure to control the FDR \citep{benjamini1995,benjamini2001}. 
Second, RCTs are often too small to train reliable patient-level treatment effect models and are frequently criticized as underpowered for heterogeneous treatment effect analysis \citep{Kontopantelis2018,kent_predictive_2020}. This is particularly important in precision medicine, where treatment effect heterogeneity can be difficult to learn from small RCTs alone \citep{Inoue2024MLHTE}. To improve model training, we leverage external data sources, such as real-world data (RWD), which may expand covariate support and stabilize flexible effect models \citep{Kent2018,Lipkovich2024}. Because external data may not be exchangeable with the target trial population, we use external data only for model training and keep conformal calibration anchored in the RCT data.

In general, we propose a framework that (i) quantifies uncertainty in beneficiary identification using conformal $p$-values, (ii) explicitly addresses multiplicity through BH-based FDR control, and (iii) leverages external data to improve treatment effect model training under limited RCT sample sizes, while using RCT-based conformal calibration to preserve validity for the target trial population. Specifically, we reformulate CATE-based beneficiary identification as an individual-level hypothesis testing problem.
The proposed framework is intended for settings where investigators need to make selection decisions for a fixed group of candidate patients. For example, after an initial RCT, it may be used to identify patients for a follow-up target trial, an enrichment strategy, or a second-stage design. More broadly, it can support individualized treatment decisions by helping determine, before treatment assignment, which patients have evidence of benefit.
In this way, the framework links flexible treatment effect modeling with FDR-controlled patient selection in precision medicine.

The remainder of the paper is organized as follows. Section~\ref{sec:method} introduces the basic setup, pseudo ITE derivation, conformal $p$-value construction and FDR-controlled selection procedure. Section~\ref{sec:sim} presents simulation studies evaluating validity, power, and robustness under external-data bias. Section~\ref{sec:case_study} demonstrates how the proposed framework can be applied in practice through a case study. Section~\ref{sec:discussion} concludes with implications, limitations, and future directions.

\section{Method}\label{sec:method}
In this section, we propose a conformal selection framework for personalized treatment decision-making. The framework uses external data to strengthen treatment effect modeling, while maintaining explicit error control for beneficiary selection. Rather than directly ranking or thresholding estimated CATEs, we formulate beneficiary identification as an individual-level testing problem.

\subsection{Basic Setup}\label{sec:method_basicsetup}

Based on the potential outcomes framework \citep{Rubin1974}, for each individual, let $A \in \{0,1\}$ denote a binary treatment indicator, $Y \in \mathbb{R}$ the observed outcome, and $X \in \mathbb{R}^p$ a $p-$dimension vector of baseline covariates. For each individual, let \(Y(1)\) and \(Y(0)\) denote the potential outcomes under treatment and control, respectively.

We consider two data sources. The RCT dataset contains $n$ subjects with observed data $\mathcal{D}_{\mathrm{RCT}} = \{(Y_i, A_i, X_i)\}_{i=1}^n$, while the RWD dataset contains $m$ subjects with observed data $\mathcal{D}_{\mathrm{RWD}} = \{(Y_{n+i}, A_{n+i}, X_{n+i})\}_{i=1}^{m}$. Let \(S\in\{0,1\}\) denote the data source indicator, where \(S=1\) corresponds to the RCT and \(S=0\) to the RWD. In addition, we consider a candidate population $\mathcal{D}_{\mathrm{Test}} = \{X_{n+m+j}\}_{j=1}^{n_t}$, where treatment decisions have not yet been made and outcomes are not observed.  For these candidates, only baseline covariates are available. Throughout, we assume that the covariate distributions across data sources have sufficient overlap. 

Under the Stable Unit Treatment Value Assumption (SUTVA) \citep{Rubin1990}, the observed outcome satisfies
\begin{align*}
    Y_i = A_i Y_i(1) + (1-A_i) Y_i(0).
\end{align*}
Within the RCT, the treatment effect is identified through randomization mechanism. Assumption \ref{as:ident_rct} formalizes this condition and provides the basis for valid inference in our framework.

\begin{assumption}[Identification within the RCT]\label{as:ident_rct}
For individuals with $S=1$:
\begin{enumerate}
    \item \textit{Consistency}: $Y = A Y(1) + (1-A) Y(0)$;
    \item \textit{Positivity}: $0 < e(x) < 1$ for all $x$ such that $f_{X \mid S}(x \mid 1) > 0$;
    \item \textit{Randomization}: $Y(a) \perp A \mid (X, S=1)$ for $a \in \{0,1\}$.
\end{enumerate}
\end{assumption}

For the observational data (RWD), we impose the standard strong ignorability assumption to support their use in model training. Assumption \ref{as:strong_igno} rules out unmeasured confounding within RWD after conditioning on the observed covariates. Although this assumption is not required for identification or inference in the RCT, it helps justify the use of RWD for nuisance estimation and treatment effect modeling. If this assumption is violated, the external-data-assisted CATE model may be biased, which can affect efficiency, power, and the stability of beneficiary selection.

\begin{assumption}[Strong ignorability in RWD]\label{as:strong_igno}
For individuals with \(S=0\),
\[
(Y(1), Y(0)) \perp A \mid X.
\]
\end{assumption}

The individual treatment effect (ITE) is defined as $\tau_i = Y_i(1) - Y_i(0)$, which quantifies the benefit of treatment relative to control for individual $i$. Since only one potential outcome is observed for each individual, \(\tau_i\) is not directly identifiable. Instead, we focus on the conditional average treatment effect (CATE) as the target estimand,
\begin{align}
    \tau(x) = \mathbb{E}\!\left[ Y(1) - Y(0) \mid X = x \right],
\end{align}
which characterizes treatment effect heterogeneity across covariate profiles. Thus, throughout this paper, beneficiary identification is based on $\tau(x)$ evaluated at a candidate patient's baseline covariates, not on the unobservable unit-level causal contrast \(\tau_i\). A selected ``beneficiary" should be interpreted as a candidate whose covariate profile shows evidence of conditional treatment benefit. In practice, the CATE \(\tau(x)\) is often estimated using flexible modeling approaches, yielding an estimator \(\hat{\tau}(x)\), which we refer to as the predicted treatment effect, or equivalently a CATE-based individualized treatment effect prediction. A common strategy is to identify beneficiaries by thresholding or ranking these predictions; for example, one may select individuals with \(\hat{\tau}(X_i)>0\). However, such approaches typically ignore uncertainty in both estimation and selection, making it difficult to assess the reliability of individual-level treatment decisions.

Motivated by this limitation, we move beyond point estimation and formulate individualized treatment recommendation as a multiple hypothesis testing problem. Specifically, for each candidate individual with covariates $X_{n+m+j}$, we consider
\begin{equation}\label{eq:hypothesis}
H_{0j}: \tau(X_{n+m+j}) \le c_j
\quad \text{versus} \quad
H_{1j}: \tau(X_{n+m+j}) > c_j,
\end{equation}
for \(j=1,\dots,n_t\), where \(c_j\) represents a clinically meaningful threshold. Hence, the set of null hypotheses is denoted as $\mathcal{H}_0=\{j:\tau(X_{n+m+j})\le c_j\}$. This formulation forms the basis of our conformal inference framework, which enables uncertainty-aware identification of treatment beneficiaries.

\subsection{Proposed Framework}\label{sec:method_framework}
Figure \ref{fig:framework} provides an overview of the proposed framework, which integrates pseudo-outcome–based ITE estimation with conformal inference to identify treatment beneficiaries while controlling the FDR. The data sources and structure have been introduced in Section~\ref{sec:method_basicsetup}. We now present the methodological details of each remaining phase of the framework.

\begin{figure}[ht]
    \centering
    \includegraphics[width=0.8\linewidth]{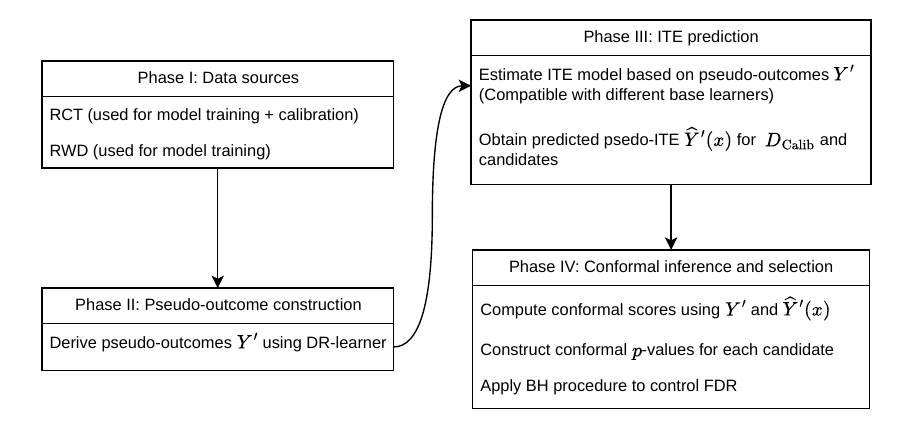}
    \caption{Flowchart of the proposed framework for identifying treatment beneficiaries}
    \label{fig:framework}
\end{figure}

\subsubsection{Pseudo ITE Outcome}\label{sec:method_Pseudo}
Because the ITE $\tau_i = Y_i(1) - Y_i(0)$ is unobservable, it cannot be used directly as a regression target. A common strategy is to construct a pseudo-outcome from observed data whose conditional expectation recovers the CATE \(\tau(x)\). In this paper, we adopt the doubly-robust (DR) pseudo-outcome construction following \citet{Kennedy2023}. This approach has been widely used in the literature on heterogeneous treatment effect estimation, which aims to converts the observed outcome $Y$, along with treatment assignment, propensity score, and outcome regression functions into, into a pseudo ITE outcome $Y'$ that can be used as a regression target for CATE learning. DR pseudo-outcome construction also allows flexible nuisance modeling and retains double robustness with respect to the propensity score and outcome regression models. 

Let \(e(x)=\Pr(A=1\mid X=x)\) denote the propensity score, and let $\mu_a(x)=\mathbb{E}(Y\mid A=a,X=x)$ denote the outcome regression function for treatment arm $a\in\{0,1\}$. These nuisance functions are defined under the identification assumptions introduced in Section~\ref{sec:method_basicsetup}, which ensure that the treatment effect \(\tau(x)\) is well-defined. Given estimates $\hat e(x)$, $\hat\mu_1(x)$, and $\hat\mu_0(x)$, we define the pseudo-outcome as
\begin{equation}\label{eq:pseudoITE}
Y'_i = \frac{A_i-\hat e(X_i)}{\hat e(X_i)\left(1-\hat e(X_i)\right)} \left(Y_i-\hat\mu_{A_i}(X_i)\right)+\hat\mu_1(X_i)-\hat\mu_0(X_i).
\end{equation}

The validity of the pseudo-outcome builds on the identification assumptions in Section~\ref{sec:method_basicsetup}, which ensure that \(\tau(x)\) is identifiable. In addition, standard regularity conditions on the nuisance functions are required. In particular, the propensity score must satisfy \(0 < e(x) < 1\) for all \(x\) in the support of \(X\), so that the weighting term in \eqref{eq:pseudoITE} is well-defined. At the population level, replacing the estimated nuisance functions with their true counterparts yields a pseudo-outcome whose conditional expectation satisfies
\begin{equation}\label{eq:pseudoITE_consistent}
    \mathbb{E}(Y'_i \mid X_i = x) = \tau(x).
\end{equation}
Moreover, the DR construction ensures that this property continues to hold if either the propensity score model or the outcome regression models are correctly specified \citep{Kennedy2023}. \eqref{eq:pseudoITE_consistent} is critical and useful because it simplifies heterogeneous treatment effect estimation to a standard regression problem. Once pseudo-outcomes are constructed, we can regress \(Y'\) on \(X\) using flexible machine learning methods to obtain an estimator \(\hat\tau(x)\). Therefore, $Y'$ should be viewed as a regression target for learning the CATE $\tau(x)$, rather than as an observed ITE.

In our framework, both the RWD and a subset of the RCT data are used to estimate nuisance functions and construct pseudo-outcomes, leveraging the larger sample size and broader covariate support of the RWD. Meanwhile, an independent subset of RCT data is reserved for conformal calibration, ensuring valid inference in subsequent steps. To reduce overfitting and maintain independence between model training and calibration, we employ sample splitting and cross-fitting \citep{Chernozhukov2018}. Specifically, the training data are partitioned into $K$ folds, and for each fold $k \in \{1,\dots,K\}$, nuisance functions are estimated using the remaining data and applied to the held-out fold to construct pseudo-outcomes. The resulting pseudo-outcomes are then pooled and used to train a regression model for \(\tau(x)\). This framework is flexible and accommodates a wide range of learning algorithms for both nuisance estimation and treatment effect modeling, including parametric, semiparametric, and modern machine learning methods. The implementation details are summarized in Algorithm~\ref{alg:phase2_pseudoITE}.

\begin{algorithm}[ht]
\caption{Construct Pseudo-ITEs (DR-Learner)}
\label{alg:phase2_pseudoITE}
\SetKwInOut{Input}{Input}
\SetKwInOut{Output}{Output}
\SetKwInput{KwData}{Input}
\SetKwInput{KwResult}{Output}
\KwData{$\mathcal{D}_{\mathrm{Train}}$ (comprising RWD and $\mathrm{RCT}_{\mathrm{Train}}$), $\mathcal{D}_{\mathrm{Calib}}$, number of folds $K$}
\KwResult{Pseudo-ITE outcomes $Y'$ for both $\mathcal{D}_{\mathrm{Train}}$ and $\mathcal{D}_{\mathrm{Calib}}$}

\vspace{1mm}
\tcc{1 Propensity Score Estimation}
For RCT samples in $\mathcal{D}_{\mathrm{Train}}$ and $\mathcal{D}_{\mathrm{Calib}}$, set $\hat{e}(x) = 0.5$\;
For RWD samples in $\mathcal{D}_{\mathrm{Train}}$, estimate $\hat{e}(x)$ using a selected propensity model\;

\vspace{1mm}
\tcc{2 Cross-fitted Outcome Modeling}
Randomly split $\mathcal{D}_{\mathrm{Train}}$ into $K$ approximately equal folds $\mathcal{I}_1,\dots,\mathcal{I}_K$\;
\For{$k=1,\dots,K$}{
    Let the training subset be $\mathcal{D}_{-k} = \mathcal{D}_{\mathrm{Train}}\setminus \mathcal{I}_k$\;
    Train outcome models on $\mathcal{D}_{-k}$ for $A=1$ and $A=0$ to obtain $\hat{\mu}^{(-k)}_1(x)$ and $\hat{\mu}^{(-k)}_0(x)$\;
    
    \For{$i \in \mathcal{I}_k$}{
        Obtain out-of-fold predictions $\hat{\mu}^{(-k)}_1(X_i)$ and $\hat{\mu}^{(-k)}_0(X_i)$\;
    }
    
    \For{$j \in \mathcal{D}_{\mathrm{Calib}}$}{
        Obtain out-of-sample predictions $\hat{\mu}^{(-k)}_1(X_j)$ and $\hat{\mu}^{(-k)}_0(X_j)$\;
    }
}
Average the $K$ predictions for the calibration set: $\bar{\mu}_a(X_j) = \frac{1}{K}\sum_{k=1}^K \hat{\mu}^{(-k)}_a(X_j)$\;

\vspace{1mm}
\tcc{3 Construct Pseudo-ITEs}
\For{each individual $i \in \mathcal{D}_{\mathrm{Train}} \cup \mathcal{D}_{\mathrm{Calib}}$}{
    \eIf{$i \in \mathcal{D}_{\mathrm{Train}}$ (with $i \in \mathcal{I}_k$)}{
        Set $\hat{\mu}_{A_i}(X_i)$, $\hat{\mu}_1(X_i)$, and $\hat{\mu}_0(X_i)$ to the out-of-fold predictions from model $k$\;
    }{
        Set $\hat{\mu}_{A_i}(X_i)$, $\hat{\mu}_1(X_i)$, and $\hat{\mu}_0(X_i)$ to the averaged predictions $\bar{\mu}$\;
    }
    
    Compute the doubly-robust pseudo-outcome $Y'$ based on Equation \eqref{eq:pseudoITE}\;
}
\Return Set of pseudo-outcomes $\{Y'_i\}$ for all $i \in \mathcal{D}_{\mathrm{Train}} \cup \mathcal{D}_{\mathrm{Calib}}$
\end{algorithm}

The DR pseudo-outcome is used as a regression target for learning the CATE. At the population level, when the nuisance functions are correctly specified, it satisfies $\mathbb{E}(Y' \mid X=x)=\tau(x)$. The DR pseudo-outcome introduced here is useful not only for treatment effect estimation, but also for subsequent conformal inference. In particular, conformal procedures based on DR pseudo-outcomes can yield valid inference for individual treatment effects under suitable stochastic ordering conditions between pseudo-outcome-based conformity scores and their oracle counterparts \citep{alaa2024conformal}. This provides the theoretical motivation for our conformal inference framework, which we describe in the next section.

\subsubsection{Conformal Prediction for Pseudo ITE}\label{sec:method_conformal}
We now use the pseudo-outcomes introduced in Section~\ref{sec:method_Pseudo} as the basis for conformal inference. After regressing \(Y'\) on covariates, the fitted function \(\hat{\tau}(x)\) gives a CATE-based prediction of treatment benefit for a candidate patient with covariates \(X=x\). Instead of selecting patients only by ranking or thresholding these predictions, we construct conformal scores from the pseudo-outcomes and use them to form conformal \(p\)-values for individual-level beneficiary selection.

\vspace{1em}
\noindent\textbf{Conformal scores.}
We start by defining a conformal score $V(x,y)$, which measures the compatibility of a candidate outcome value $y$ with the fitted treatment effect model at covariate value $x$. 
Common examples include absolute residuals, standardized residuals, and conformalized quantile regression scores \citep{shafer2008tutorial}. In our setting, \(y\) is taken to be the pseudo-outcome \(Y'\), and the fitted model is \(\hat{\tau}(x)\). Thus, \(V(x,Y')\) measures the discrepancy between the pseudo-outcome and the fitted CATE model, and these scores are used to construct conformal \(p\)-values for individual-level beneficiary selection.

For individuals in the training set $\mathcal{D}_{\mathrm{Train}}$ and the calibration set $\mathcal{D}_{\mathrm{Calib}}$, we compute the pseudo-outcome-based conformal scores as
\[
V_i = V(x=X_i, y=Y'_i), \quad i \in \mathcal{D}_{\mathrm{Train}} \cup \mathcal{D}_{\mathrm{Calib}}.
\]
For a candidate individual $j$, $j \in \{1,\dots,n_t\}$, for whom $Y'$ is not available, we instead evaluate the score at the null threshold:
\[
\hat V_j=V_{n+m+j} = V(x=X_{n+m+j}, y=c_j).
\]
Here, $\hat V_j$ is the null-imputed boundary score for candidate \(j\). Intuitively, the score $\hat V_j$ quantifies how far the candidate $j$ is from the threshold \(c_j\) specified under the null hypothesis. In this sense, it reflects how compatible the candidate is with being a non-beneficiary, i.e., whether $\tau(X_{n+m+j}) \le c_j$.

\begin{definition}[Monotone conformal score]
A score function $V(x,y)$ is defined to be monotone if for every fixed $x\in \mathcal{X}$,
\[
y_1 \le y_2 \quad \Longrightarrow \quad V(x,y_1) \le V(x,y_2).
\]
\end{definition}
Monotonicity aligns the score with the one-sided hypothesis \eqref{eq:hypothesis}. In particular, under the null hypothesis, 
\[
    \tau(X_{n+m+j})\le c_j\quad\Longrightarrow\quad V(X_{n+m+j},\tau(X_{n+m+j})) \le V(X_{n+m+j},c_j).
\]
This implication provides the key comparison between the oracle candidate score and the null-imputed boundary score used in the conformal \(p\)-value construction. 

For simplicity, here, we take the clinically meaningful threshold to be common across candidates, i.e., \(c_j\equiv c\). 
In this paper, we use the following one-sided score function \citep{Jin2023SelectionbyConP}:
\begin{align}\label{eq:conformal_score}
    V(x,y)=M\cdot \mathbb{I}\{y>c\}-\hat{\tau}(x)
\end{align}
where $M$ is chosen to be sufficiently large so that
$M > 2\sup_{x\in\mathcal X_{\rm target}}|\hat\tau(x)|$, with $\mathcal X_{\rm target}$ denoting the covariate region of the calibration and candidate populations. When \(c_j\equiv 0\), the candidate score simplifies to
\[\hat V_j = -\hat\tau(X_{n+m+j}),\]
while for calibration units, 
\[V_i = M \mathbb{I}\{Y'_i > 0\} - \hat\tau(X_i).\]
This choice ensures that smaller values of candidate conformal score $\hat V_j$ correspond to stronger evidence against the null. More generally, other monotone score functions can also be used, provided their direction is aligned with the hypothesis being tested.

One appealing feature of conformal inference is that it uses the calibration scores as an empirical reference distribution, rather than relying on large-sample approximations or a fully specified parametric null distribution like asymptotic inference \citep{shafer2008tutorial}. 
In our setting, the calibration set is an independent subset of RCT subjects,
\[\mathcal{D}_{\mathrm{Calib}} = \{(Y_i, A_i, X_i)\}_{i=1}^{n_c} \subset \mathcal{D}_{\mathrm{RCT}}, n_c < n.\]

For the one-sided score in \eqref{eq:conformal_score}, smaller values of \(\hat V_j\) correspond to stronger evidence against \(H_{0j}: \tau(X_{n+m+j}) \le c_j\). In particular, when \(c_j=0\), \(\hat V_j=-\hat\tau(X_{n+m+j})\), so a smaller \(\hat V_j\) indicates a larger predicted treatment benefit. Thus, a candidate whose conformal score is unusually small relative to the calibration scores provides stronger evidence of conditional treatment benefit and is more likely to be selected as a beneficiary. The resulting rank comparison is then converted into a conformal \(p\)-value in the next paragraph.

\vspace{1em}
\noindent\textbf{Conformal $p$-values.}
Building on this calibration step, we quantify the extremeness of the candidate score $V_{n+m+j}$ by comparing it to the empirical distribution of calibration scores. The calibration scores' empirical distribution serves as the null distribution in conformal inference, which avoids the requirement of large sample theory and model specifications in asymptotic inference to approximate null distribution. Specifically, for candidate $j$, define $\hat V_j = V(X_{n+m+j},c_j)$ as the conformal score evaluated at the null threshold. Following the lower-tail direction of the one-sided score in \eqref{eq:conformal_score}, the conformal $p$-value is defined as
\begin{align}\label{eq:pvalue_orig}
    p_j=\frac{\sum_{i=1}^{n_c}\mathbb{I}(V_i<\hat V_j)+U_j\cdot (1+\sum_{i=1}^{n_c}\mathbb{I}(V_i=\hat V_j))}{n_c+1},
\end{align}
where $U_j \sim \mathrm{Unif}(0,1)$ is used to break ties. 
This rank-based conformal $p$-value measures how unusually small the candidate's null-imputed conformal score is relative to the calibration scores. In other words, smaller values of $p_j$ indicate stronger evidence against $H_{0j}:\tau(X_{n+m+j})\le c_j$ and therefore stronger evidence that candidate $j$ is likely to benefit from treatment. \eqref{eq:pvalue_orig} is used in simulation and case study to obtain the conformal $p$-values. 

Because we construct conformal scores and conformal \(p\)-values using \(Y'\) rather  than the original outcome \(Y\), it requires additional justification. 
Validity for the pseudo-outcome $Y'$ does not, by itself, imply validity for inference on the individual treatment effect. To address this issue, we build on the stochastic-ordering framework \citep{alaa2024conformal}. Their result shows that, if the conformity scores based on pseudo-outcomes are conservative relative to the corresponding oracle scores that would have been computed using the target CATE $\tau(X)$, then conformal inference based on pseudo-outcomes remains valid for CATE-level inference. Their result is stated in terms of coverage for conformal prediction intervals, whereas our procedure is formulated through conformal $p$-values for one-sided testing. Therefore, in Supplementary Material~\ref{app:coverage_to_pvalue}, we show how a valid one-sided conformal prediction set induces a generalized valid conformal $p$-value satisfying
\begin{equation*}
    \mathbb P\{p_j\le \alpha,\ H_{0j}\text{ is true}\}\le \alpha, \qquad \alpha\in[0,1],
\end{equation*}
which provides the testing interpretation needed for candidate-level screening or testing. 

The validity of conformal rank-based inference relies on an exchangeability assumption for the scores used in the conformal comparison \eqref{eq:pvalue_orig}. 

\begin{assumption}[Exchangeability for candidate-level conformal validity]
\label{as:exchangeability}
For candidate $j$, let
\[
    V_j^*=V(X_{n+m+j},\tau(X_{n+m+j}))
\]
denote the oracle candidate score. Under the null hypothesis $H_{0j}:\tau(X_{n+m+j})\le c_j$, the oracle candidate score $V_j^*$ is exchangeable with the calibration scores $\{V_i:i\in\mathcal D_{\mathrm{Calib}}\}$.
\end{assumption}
Assumption \ref{as:exchangeability} requires that, for a single candidate, the calibration scores and the candidate's oracle score $V_j^*$ to be exchangeable under the null. Under Assumption~\ref{as:exchangeability}, the rank of $V_j^*$ among $\{V_i:i\in\mathcal D_{\mathrm{Calib}}\}\cup\{V_j^*\}$ is uniformly distributed. This establishes the finite-sample conformal rank-validity underlying the candidate-level conformal $p$-value. In practice, however, the proposed procedure is applied simultaneously to a group of candidates, so multiplicity must also be addressed. We discuss FDR control for simultaneous candidate selection in Section~\ref{sec:fdr_ctr}.

\begin{remark}[Connection to conformal intervals]
\label{rem:interpretation}
The conformal $p$-value also introduces an equivalent interpretation in terms of one-sided conformal intervals. In particular, the split conformal procedure \citep{lei2018distribution} yields a lower
prediction set of the form
\begin{align*}
    \hat C(X_{n+m+j},1-\alpha) = \left[\eta(X_{n+m+j},1-\alpha), +\infty\right),
\end{align*}
where $\eta(X_{n+m+j},1-\alpha)$ denotes the $(1-\alpha)$ empirical quantile of the calibration scores. Then the one-sided test of $H_{0j}:\tau(X_{n+m+j})\le c_j$ rejects when $\eta(X_{n+m+j},1-\alpha)> c_j$.
The conformal $p$-value can then be written as
\begin{equation*}
p_j=\inf \Big\{ \alpha \in (0,1) : c_j < \eta(X_{n+m+j},1-\alpha) \Big\}.
\end{equation*}
In other words, $p_j$ is the smallest nominal level at which the null hypothesis is rejected. 
\end{remark}

Furthermore, the proposed framework can be extended naturally to other outcome types, such as binary outcomes, by using appropriate generalized models for $\hat\tau(x)$, e.g., logistic regression, and corresponding conformal score functions.

\subsubsection{False discovery rate (FDR) Control}\label{sec:fdr_ctr}
After obtaining the individual-level conformal $p$-value, we can test the individual-level hypothesis defined in \eqref{eq:hypothesis}. For example, for a pre-specified significance level $\alpha_j$, we reject $H_{0j}$ if $p_j \le \alpha_j$. In practice, however, the goal is rarely to test a single candidate in isolation. Instead, we often aim to screen a group of candidate individuals simultaneously and identify those most likely to benefit from treatment. This naturally leads to a multiple testing problem. Although each individual test has a controlled probability of false rejection, false discoveries can accumulate when many hypotheses are tested simultaneously \citep{Ranganathan2016}. If multiplicity is not addressed, the resulting selected beneficial subgroup may contain an inflated number of non-beneficial candidates, which is not ideal. Consequently, this inflation can compromise the reliability of downstream clinical or trial-design decisions.

To address multiplicity issue, there are two commonly used approaches, including family-wise error rate (FWER) control and FDR control. FWER controls the probability of making at least one false discovery \citep{Tukey1953,Benjamini2002}, whereas FDR controls the expected proportion of false discoveries among all discoveries \citep{benjamini1995}. In that sense, FDR is generally less stringent than FWER and is often more appropriate for large-scale screening problems. This is because for large-scale screening, the goal is to identify a subset of promising candidates rather than to avoid any false selection. In our setting, the objective is to select a subgroup of candidates who are likely to benefit from treatment, for example to inform a future follow-up study or a second-stage enrichment trial. Therefore, the control of FDR is more aligned with the setting of interest and is the primary focus in this paper.

Let
\[
    \mathcal H_0=\{j:\tau(X_{n+m+j})\le c_j\}
\]
denote the set of true null hypotheses, and let $\mathcal R\subseteq\{1,\ldots,n_t\}$ denote the selected set (discovery set). The FDR is defined as 
\begin{equation}\label{eq:fdr_def}
    \mathrm{FDR}=\mathbb E\left[\frac{\sum_{j=1}^{n_t}\mathbb{I}\{j\in\mathcal R,\ j\in\mathcal H_0\}}{1\vee |\mathcal R|}\right],
\end{equation}
where $1\vee |\mathcal R|=\max\{1,|\mathcal R|\}$. In healthcare and trial-design applications, controlling this quantity is very important. For example, when an aggressive treatment is under study, then false selections may expose non-beneficial individuals to unnecessary overtreatment. By controlling the FDR at a pre-specified level $q$, the proposed procedure ensures that the expected proportion of false selections among all selected candidates is bounded by $q$, i.e., $\mathrm{FDR}\le q$.

To control FDR, we apply the Benjamini-Hochberg (BH) procedure to the conformal $p$-values $\{p_j:j=1,\ldots,n_t\}$ for candidate group \citep{benjamini1995}. An alternative multiplicity adjustment is the Bonferroni correction, which selects candidates who satisfy $p_j\le q/n_t$. The Bonferroni correction controls the probability of making at least one false selection by allocating the overall error tolerance equally across all candidates. Therefore, although it also provides valid error control, it is often considered conservative for FDR-oriented screening problems and can substantially reduce power compared to BH procedure \citep{Jin2023SelectionbyConP}.

\begin{assumption}[Conditional exchangeability of conformal scores]
\label{as:score_exchangeability}
Conditional on the training stage, including the fitted nuisance functions,
the pseudo-outcome construction, and the fitted CATE model $\hat\tau(\cdot)$,
for each candidate $j$, the conformal scores 
\[\{V_i:i\in\mathcal D_{\mathrm{Calib}}\}\cup\{V_j^*\}\]
are exchangeable conditional on the candidate boundary scores
\[
    \{\hat V_\ell: \ell\ne j\}.
\]
In addition, the scores
\[
    \{V_i: i\in\mathcal D_{\mathrm{Calib}}\}
    \cup
    \{\hat V_\ell: \ell=1,\ldots,n_t\}
\]
have no ties almost surely.
\end{assumption}

Assumption~\ref{as:score_exchangeability} strengthens the candidate-level exchangeability condition in Assumption~\ref{as:exchangeability}. It is used for the finite-sample FDR argument because, under BH, whether candidate $j$ is selected depends not only on $p_j$, but also on the conformal $p$-values of the other candidates through the data-adaptive BH threshold. Since the calibration scores are constructed from pseudo-outcomes rather than oracle CATE values, the stochastic-ordering condition discussed in Section~\ref{sec:method_conformal} is also required. In the simulations and case study, we use the randomized conformal $p$-value in \eqref{eq:pvalue_orig}, where $U_j$ is used to break ties. For the finite-sample FDR control, we state the Theorem \ref{thm:fdr} using the deterministic conformal $p$-value, following the deterministic formulation of \citet{Jin2023SelectionbyConP}. This avoids auxiliary tie-breaking randomness in the proof and is stated under the no-ties assumption.

\begin{theorem}\label{thm:fdr}
Suppose that the score function $V(x,y)$ is monotone in $y$, the pseudo-outcome-based scores are conservative relative to the corresponding oracle CATE-based scores, and suppose that Assumption~\ref{as:score_exchangeability} holds, i.e., the conformal scores $\{V_1,\ldots,V_{n_c},V_j^*\}$ are exchangeable conditional on $\{\hat V_\ell:\ell\ne j\}$. Let $\mathcal H_0=\{j:\tau(X_{n+m+j})\le c_j\}$ denote the set of true null hypotheses. Assume there is no tie almost surely among $\{V_1,\ldots,V_{n_c}\} \cup \{\hat V_{\ell}:\ell=1,\ldots,n_t\}$. 
Apply the BH procedure at level $q\in(0,1)$ to the the deterministic conformal $p$-values
    \begin{align*}
        p_j=\frac{1+\sum_{i=1}^{n_c}\mathbb{I}\{V_i<\hat V_j\}}{n_c+1}.
    \end{align*}
Let the resulting selected set be
\[
    \mathcal R=\{j:p_j\le q\hat r/n_t\},
    \qquad
    \hat r=
    \max\left\{
    r:\sum_{j=1}^{n_t}\mathbb{I}\{p_j\le qr/n_t\}\ge r
    \right\}.
\]
Then, the FDR is well controlled as
\[
    \mathrm{FDR}=\mathbb E\left[\frac{\sum_{j=1}^{n_t}\mathbb{I}\{j\in\mathcal R,\ j\in\mathcal H_0\}}{1\vee |\mathcal R|}\right]\le q .
\]
\end{theorem}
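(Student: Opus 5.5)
We follow the strategy of \citet{Jin2023SelectionbyConP}: for each null candidate we replace its $p$-value by an oracle $p$-value that is exchangeable with the calibration scores, and then run the leave-one-out BH argument. The argument is conditional on the training stage throughout, so $\hat\tau(\cdot)$ and the pseudo-outcome construction are fixed.

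\textbf{Step 1: oracle $p$-values.} For each $j$, define
\[
p_j^*=\frac{1+\sum_{i=1}^{n_c}\mathbb{I}\{V_i<V_j^*\}}{n_c+1},
\]
where $V_j^*$ is the oracle score. On $H_{0j}$, monotonicity of $V$ gives $V_j^*\le \hat V_j$, and hence $p_j^*\le p_j$. The assumed conservativeness of the pseudo-outcome scores relative to the oracle CATE-based scores (the stochastic-ordering condition from Section~\ref{sec:method_conformal}) ensures that comparing $V_j^*$ with the pseudo-outcome calibration scores $V_i$ does not make $p_j^*$ stochastically smaller than uniform. By Assumption~\ref{as:score_exchangeability}, the rank of $V_j^*$ among $\{V_1,\dots,V_{n_c},V_j^*\}$ is uniform conditional on $\{\hat V_\ell:\ell\neq j\}$. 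Therefore
\[
\mathbb P\bigl(p_j^*\le t\mid \{\hat V_\ell\}_{\ell\ne j}\bigr)\le t \quad\text{for } t\in\{k/(n_c+1)\}.
\]

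\textbf{Step 2: leave-one-out decomposition.} Write
\[
\mathrm{FDR}=\sum_{j\in\mathcal H_0}\sum_{k=1}^{n_t}\frac1k\,\mathbb P(j\in\mathcal R,\ |\mathcal R|=k).
\]
For each $j$, let $\hat r_{j}$ be the BH rejection count obtained when $p_j$ is replaced by $0$, keeping every other $p_\ell$ unchanged. The standard BH self-consistency fact gives $\{j\in\mathcal R\}=\{p_j\le q\hat r_j/n_t\}$, and on this event $|\mathcal R|=\hat r_j$. This yields
\[
\mathrm{FDR}=\sum_{j\in\mathcal H_0}\mathbb E\!\left[\frac{\mathbb{I}\{p_j\le q\hat r_j/n_t\}}{\hat r_j}\right].
\]

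\textbf{Step 3: the main obstacle.} The difficulty is that $\hat r_j$ depends on the $p_\ell$ with $\ell\ne j$, and each such $p_\ell$ is a function of both the calibration scores and $\hat V_\ell$. So $\hat r_j$ is not independent of the rank of $V_j^*$: the calibration scores are shared across candidates.

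We handle this as in Jin and Candès, by conditioning on the unordered multiset $\mathcal E_j=\{V_1,\dots,V_{n_c},V_j^*\}$ together with $\{\hat V_\ell\}_{\ell\ne j}$. We replace $\hat r_j$ by a version $\tilde r_j$ computed from modified $p$-values $\tilde p_\ell$, in which the calibration set is treated as $\mathcal E_j$ with $V_j^*$ included. Concretely,
\[
\tilde p_\ell=\frac{\sum_{i}\mathbb{I}\{V_i<\hat V_\ell\}+\mathbb{I}\{V_j^*<\hat V_\ell\}}{n_c+1}.
\]
Each $\tilde p_\ell$ is measurable with respect to $(\mathcal E_j,\{\hat V_\ell\})$, so $\tilde r_j$ is a deterministic function of the conditioning information.

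We then need to check that, on the event $\{p_j^*\le q\hat r_j/n_t\}$ or the relevant event, $\hat r_j=\tilde r_j$, or at least that the resulting indicator-over-count is dominated. This is done by comparing $V_j^*$ with the $\hat V_\ell$ and using $V_j^*\le\hat V_j$ together with the no-ties assumption. Whenever $V_j^*$ is small enough for $j$ to be rejected, its inclusion shifts the $\tilde p_\ell$ in a way that leaves the BH count unchanged.

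\textbf{Step 4: conclusion.} Conditional on $\mathcal E_j$, the rank of $V_j^*$ is uniform, so
\[
\mathbb P\bigl(p_j^*\le q\tilde r_j/n_t\mid\cdot\bigr)\le q\tilde r_j/n_t.
\]
Each null term is therefore at most $q/n_t$. Summing over $j\in\mathcal H_0$ gives
\[
\mathrm{FDR}\le q|\mathcal H_0|/n_t\le q.
\]
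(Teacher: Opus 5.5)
Your proposal is correct and follows essentially the same Jin--Cand\`es-style argument as the paper: under the null, monotonicity gives $p_j^*\le p_j$; you use modified $p$-values $\tilde p_\ell$ (the paper's $p_\ell^{(j)}$) that add $V_j^*$ to the reference set, a leave-one-out BH count with candidate $j$'s entry set to $0$, and conditional super-uniformity of $p_j^*$ given the unordered multiset $\mathcal E_j$ and $\{\hat V_\ell\}_{\ell\ne j}$. The only difference is how the invariance in Step 3 is checked: the paper works on $\{\tau_j\le c_j,\ j\in\mathcal R\}$ and compares each $\hat V_\ell$ with $\hat V_j$, whereas your event $\{p_j^*\le q\hat r_j/n_t\}$ also works, because $\tilde p_\ell\ne p_\ell$ only when $\hat V_\ell\le V_j^*$, in which case $p_\ell\le p_j^*$ already lies below the threshold, so $\tilde r_j=\hat r_j$ there (you should write out this two-line check instead of hedging over ``the relevant event'').
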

Theorem~\ref{thm:fdr} shows that applying the BH procedure to the conformal $p$-values can control the FDR at the target level $q$. In our setting, this means that the expected proportion of false selections among selected candidates is bounded by $q$, which provides a multiplicity-adjusted validity guarantee for simultaneous screening. The proof is provided in Supplementary Material~\ref{app:FDR_control}. 

In the proposed framework, external data are used only to improve the training of the pseudo-ITE model, but not to define the calibration reference distribution for conformal ranking. Therefore, intuitively, the external data help the model learn treatment-effect patterns from a larger and more diverse sample, while the held-out RCT calibration set determines how strong the evidence must be before a candidate is selected. Because the external data are not used for conformal calibration, the potential hidden bias or population shift in the external data does not directly affect the FDR guarantee, as long as the RCT calibration set remains exchangeable with the candidate population. However, if the external data are poorly aligned with the trial population, they may still affect the fitted CATE model. This is consistent with prior work on prognostic adjustment, where historical data are used to improve prediction or efficiency while inference remains anchored in the randomized trial data \citep{HojbjerreFrandsen2026}. This statement is also supported empirically in the simulation studies in Section~\ref{sec:sim_outcomedrift}. To assist in the implementation of BH procedure, we provide Algorithm \ref{alg:cfBH}. In practice, people can apply the BH procedure to the conformal \(p\)-values and select candidates whose BH-adjusted conformal \(p\)-values are no larger than the target FDR level \(q\). The FDR tolerance level $q$ should be pre-specified, and should be chosen based on the clinical or study-design tolerance for false beneficiary selections. 
In screening-oriented or enrichment settings, moderately liberal values such as \(q=0.10\) or \(q=0.20\) are often considered to balance discovery and false-selection control, whereas more stringent values such as \(q=0.05\) may be preferred when selected candidates directly inform high-stakes treatment decisions.

\begin{algorithm}[ht]
\caption{BH procedure: Selection by prediction with conformal $p$-values}
\label{alg:cfBH}
\SetKwInOut{Input}{Input}
\SetKwInOut{Output}{Output}
\SetKwInput{KwData}{Input}
\SetKwInput{KwResult}{Output}

\KwData{Calibration set $\mathcal D_{\mathrm{Calib}}$ with pseudo-outcomes $\{(X_i,Y_i') : i=1,\ldots,n_c\}$; candidate covariates $\{X_{n+m+j}:j=1,\ldots,n_t\}$; individual-level thresholds $\{c_j:j=1,\ldots,n_t\}$; target FDR level $q\in(0,1)$; monotone conformal score $V:\mathcal X\times\mathcal Y\to\mathbb R$}
\KwResult{Selected candidate set $\mathcal R$ after BH adjustment}

\vspace{1mm}
\tcc{1 Compute calibration and candidate conformal scores}
Compute calibration scores: $V_i=V(X_i,Y_i'), \qquad i=1,\ldots,n_c.$

Compute candidate boundary scores: $\hat V_j=V(X_{n+m+j},c_j), \qquad j=1,\ldots,n_t.$

\vspace{1mm}
\tcc{2 Construct conformal $p$-values}
For each candidate $j=1,\ldots,n_t$, compute the conformal $p$-value $p_j$ using \eqref{eq:pvalue_orig}\;

\vspace{1mm}
\tcc{3 Apply the BH procedure}
Compute
\[
    \hat r=\max\left\{r\in\{1,\ldots,n_t\}:\sum_{j=1}^{n_t}\mathbb I\{p_j\le qr/n_t\}\ge r \right\},
\]
with $\hat r=0$ if the set is empty\;

\Return BH-adjusted selection set $\mathcal R=\left\{j:p_j\le q\hat r/n_t\right\}$ 
\end{algorithm}

\section{Simulation}\label{sec:sim}
\subsection{Simulation Setup}\label{sec:sim_setup}
We conduct an extensive simulation study to evaluate the performance of the proposed conformal selection framework. The primary goal of the simulation is to assess whether the proposed method achieves finite-sample FDR control, which constitutes the main theoretical guarantee of our approach and is formally established in the Supplementary Material~\ref{app:FDR_control}. Conditional on valid FDR control, we further examine the method's ability to identify true treatment beneficiaries. This is quantified by power, defined as the expected proportion of true beneficiaries that are selected, where true beneficiaries are candidates satisfying \(\tau(X_{n+m+j})>c_j\), with \(c_j\equiv 0\) in our simulations. Accordingly, FDR and power are treated as the primary performance metrics throughout the simulation study.

To investigate performance across a range of data regimes, we consider RCT sample sizes of $n_{\text{RCT}}\in\{50,100,200\}$, where $n_{\text{RCT}}=50$ represents a small-sample setting motivated by early-phase trials or rare disease studies. To assess the impact of borrowing external data for effect model training and increasing covariate diversity, we generate an external dataset of size $n_{\text{RWD}}=800$. The candidate population to be tested is fixed at $n_{\text{Test}}=100$ in all scenarios.

\subsubsection{Data-generating process.}
We consider both linear and nonlinear CATE structures to evaluate method's robustness to model misspecification. For each unit $i$, covariates $\boldsymbol{X}_i=(X_{i1},X_{i2})^\top\in\mathbb{R}^2$ are independently generated with $X_{i1},X_{i2}\overset{\text{i.i.d.}}{\sim}\mathrm{Unif}(-1,1)$. For RCT units ($S=1$), treatment is randomly assigned according to $A_i\sim\mathrm{Bernoulli}(0.5)$, whereas for external units ($S=0$), we fix the treatment assignment probability at $\Pr(A_i=1)=0.4$, which is to be estimated.

The baseline mean outcome is specified as $\mu_0(\boldsymbol{X}_i)=0.5 + X_{i1} + X_{i2}$. Under the linear CATE scenario, the treatment effect is given by $\tau(\boldsymbol{X}_i)=c_0 + \beta_1 X_{i1} + \beta_2 X_{i2} + X_{i1}X_{i2}$, whereas under the nonlinear scenario, $\tau(\boldsymbol{X}_i)=c_0 + \gamma_1 X_{i1}X_{i2} + \gamma_2 \exp\{1.5(X_{i2}-1)\}$. The treated mean outcome is $\mu_1(\boldsymbol{X}_i)=\mu_0(\boldsymbol{X}_i)+\tau(\boldsymbol{X}_i)$, and potential outcomes are generated as $Y_i(0)=\mu_0(\boldsymbol{X}_i)+\varepsilon_{i0}$ and  $Y_i(1)=\mu_1(\boldsymbol{X}_i)+\varepsilon_{i1}$, where $\varepsilon_{ia}\sim\mathcal{N}(0,\sigma_i^2)$ independently for $a\in\{0,1\}$. We consider three noise structures, including two homoskedastic settings with $\sigma_i=\sigma\in\{0.3,0.6\}$, and one heteroskedastic setting in which $\sigma_i=\sigma_i(\boldsymbol{X}_i)$ depends on $\tau(\boldsymbol{X}_i)$. The candidate dataset follows the same data-generating mechanism as the RCT, which ensures the exchangeability between the calibration and testing sets, a key requirement for conformal validity.

\subsubsection{Introduce covariate shift and hidden bias.}
To assess robustness under covariate shift, we introduce two levels of severity, i.e., slight and moderate, corresponding to distributional differences between the RCT and external data. Specifically, for $S=0$, we first generate a super-population of size $N_{\text{super}}=10{,}000$ under the same covariate-generating mechanism, and then sample individuals according to a logistic selection model, $\Pr(S_i=1\mid \boldsymbol{X}_i)=\mathrm{logit}^{-1}(a_0+\eta^\top \boldsymbol{X}_i)$, where the parameter vector $\eta$ controls the degree of covariate shift and $a_0$ is chosen such that $\mathbb{E}\{\Pr(S=1\mid \boldsymbol{X})\}=(N_{\text{super}}-n)/N_{\text{super}}$.

We initially study the performance of proposed methods under covariate shift, assuming no hidden bias from the external data. In this setting, the CATE function and potential outcome distributions are shared across data sources, so the external data can improve beneficiary identification by increasing sample size and covariate-space diversity without introducing outcome drift. Then, we consider a more realistic setting in which hidden outcome bias may exist in the external data. To examine robustness to such hidden bias, we randomly select $50\%$ of external units to be biased. For these biased external subjects, we shift the control potential outcome mean by a constant $b\in\{0.4,0.8\}$, while the remaining external units follow the same potential-outcome models as the RCT. Because external data are used exclusively to augment the training set and are never included in the conformal calibration step, our framework is expected to prevent the RCT from being dominated by real-world data and provide a valid error rate control while compromise some power. 

\subsubsection{Base-modeling and error rate control strategies.}

Because our proposed conformal selection framework is model-agnostic, it can be integrated with a wide range of treatment effect learners. Therefore, in the simulation study, we evaluate its performance across diverse base learners rather than restricting attention to a single modeling strategy. Specifically, we consider five learners that span both frequentist and Bayesian approaches: linear regression; random forests implemented via the \texttt{ranger} package \citep{Wright2017ranger}; Super Learner implemented using the \texttt{SuperLearner} package \citep{vanDerLaan2007SuperLearner}; Bayesian additive regression trees (BART) implemented with the \texttt{BART} package \citep{Chipman2010BART,BART_pack}; and Bayesian model averaging (BMA) implemented using the \texttt{BAS} package \citep{Hoeting1999BMA,Clyde2011BAS}. Each learner is implemented both with and without incorporating external data for training. This design allows us to evaluate how external data affect the efficiency and power of the proposed framework, while maintaining RCT-based conformal calibration for inference. For decision-making, each modeling strategy is combined with three multiplicity-control approaches: no FDR control, Bonferroni correction, and the Benjamini-Hochberg (BH) procedure. Here, we prespecify the target FDR level to \(q=0.15\), reflecting a moderately stringent level appropriate for screening-oriented beneficiary selection.

All simulations are conducted with $1{,}000$ Monte Carlo replications. In each replication, the RCT data are randomly partitioned into a training set $\mathcal{R}_{\text{Train}}$ and a calibration set $\mathcal{D}_{\text{Calib}}$ using a $3{:}2$ split. When external data are not used, the training set is $\mathcal{D}_{\text{Train}}=\mathcal{R}_{\text{Train}}$; otherwise, $\mathcal{D}_{\text{Train}}=\mathcal{R}_{\text{Train}}\cup\mathcal{D}_{\rm RWD}$.

\subsection{Simulation Results}\label{sec:sim_res}
After describing the simulation setup, we now summarize and interpret the main findings from the simulation studies. We focus on FDR control and statistical power under different settings described above, starting with scenarios that involve covariate shift only and then followed by outcome drift.
\subsubsection{Covariate shift}\label{sec:sim_covdrift}
\begin{figure}[ht]
    \centering
    \includegraphics[width=\linewidth]{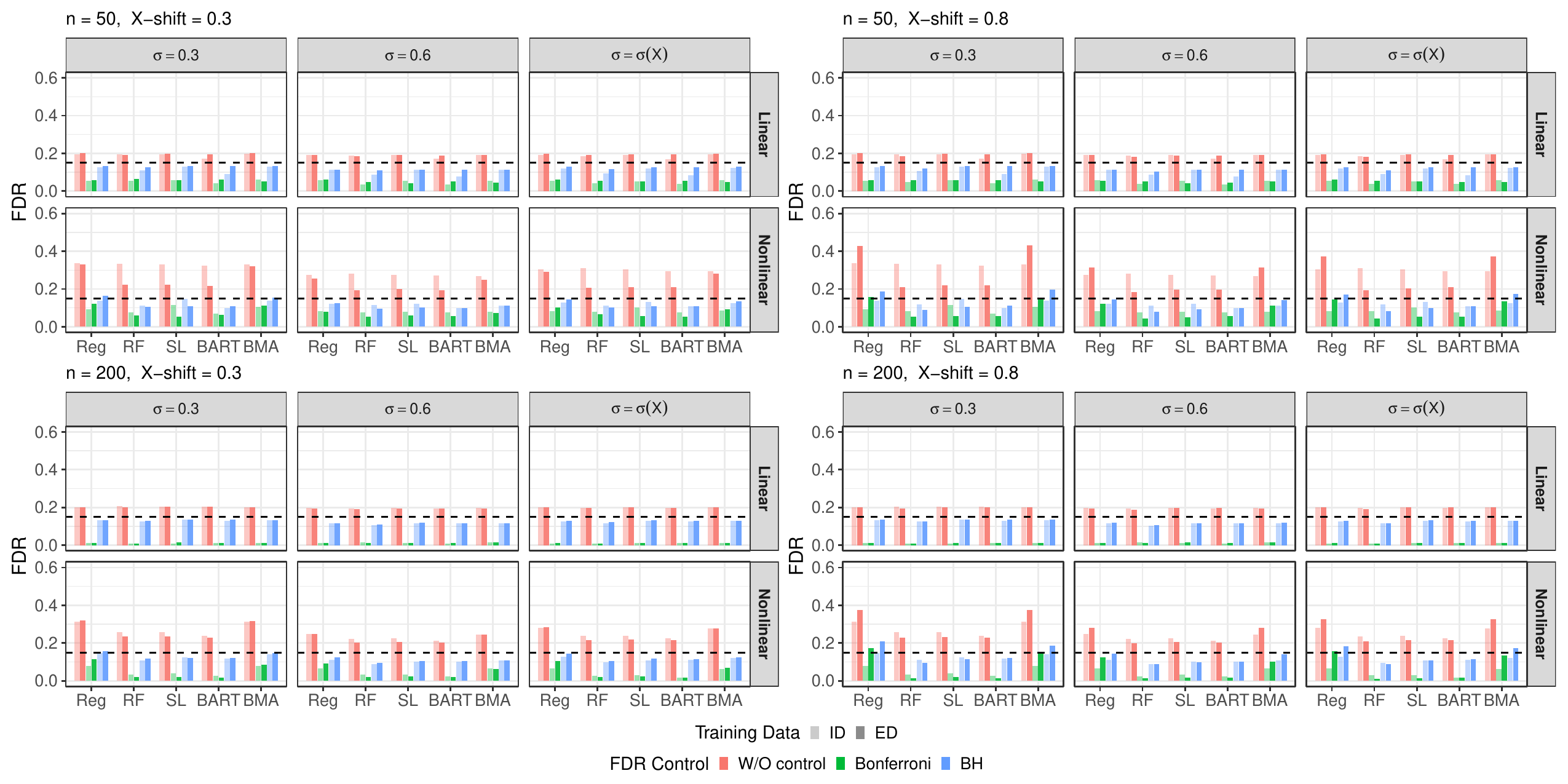}
    \caption{FDR across different scenarios given covariate shift}
    \label{fig:FDR_all_Xshift}
\end{figure}

Figure~\ref{fig:FDR_all_Xshift} shows the empirical FDR across different sample sizes, covariate shift levels, outcome model structures, and noise specifications. Across all scenarios, the BH procedure consistently controls the FDR at or below the pre-specified tolerance level of $q=0.15$. This holds even when the covariate shift is moderate to severe and when flexible machine learning methods are used to estimate individual treatment effects. In contrast, the Bonferroni procedure is uniformly conservative, leading to FDR values close to zero in nearly all settings. While this behavior guarantees error control, it comes at the cost of substantially reduced power, as shown in Figure~\ref{fig:Power_all_Xshift}. When without controlling FDR, noticeable FDR inflation can appear in several scenarios, particularly when the outcome model is nonlinear. The nonlinear structure might be more realistic and more challenging to model correctly. Therefore, it becomes very important to explicitly control FDR when the outcome-generating mechanism is complex and not directly testable in practice.

\begin{figure}[ht]
    \centering
    \includegraphics[width=\linewidth]{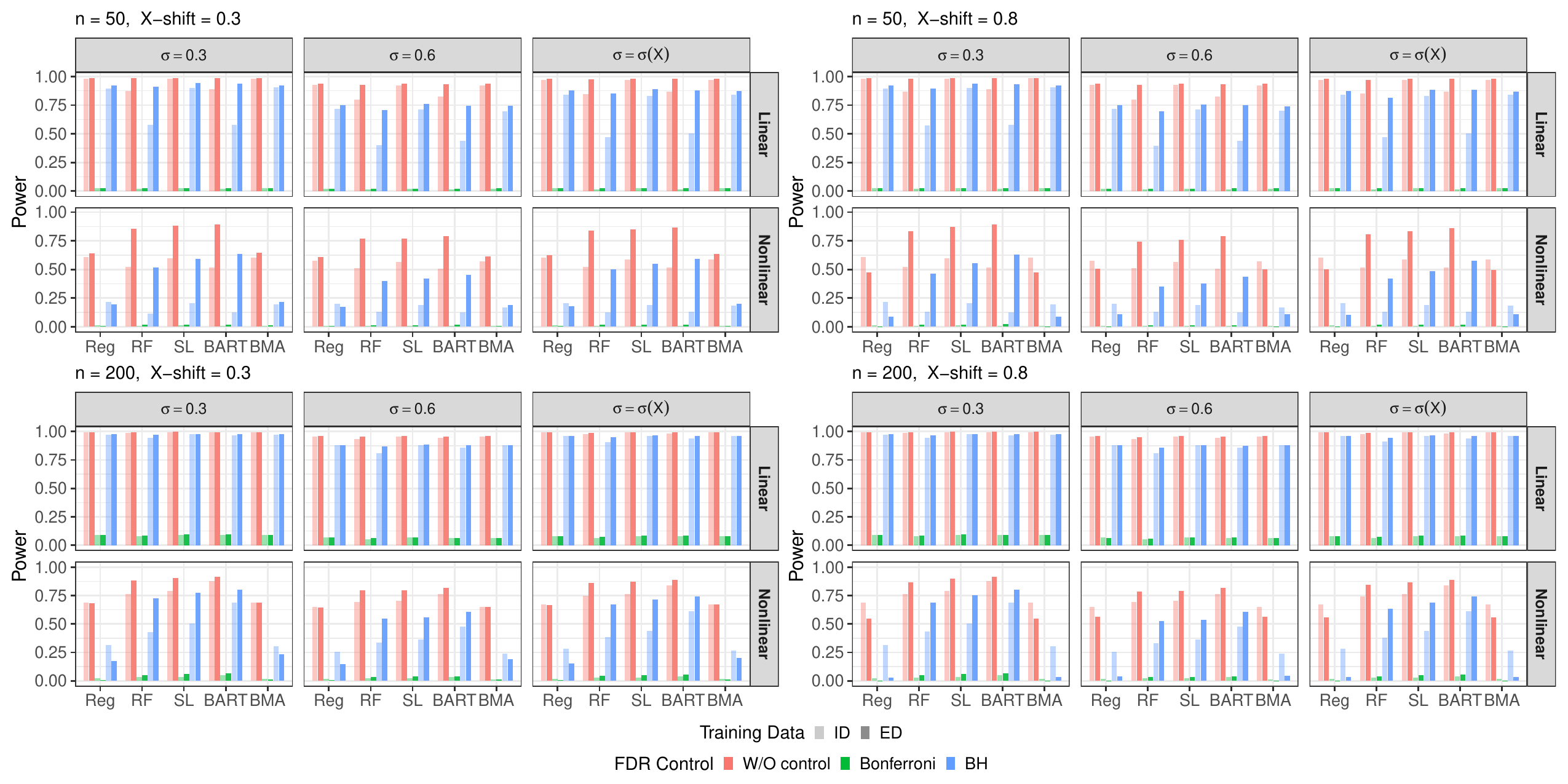}
    \caption{Power across different scenarios given covariate shift}
    \label{fig:Power_all_Xshift}
\end{figure}

Figure~\ref{fig:Power_all_Xshift} compares power across methods that rely only on RCT data and those that additionally borrow external data. Overall, borrowing external data leads to clear power gains compared to using RCT only, even in the presence of moderate to severe covariate shift. These power gains arise from the increased diversity of the covariate space, which improves the training of individual treatment effect models by providing more comprehensive covariate information. Nevertheless, the magnitude of the power improvement depends on both the degree of covariate shift and the RCT sample size. When the shift between RCT and external data is smaller, the borrowed data more closely resemble the target population, resulting in slightly larger power gains. More importantly, the benefit of borrowing external data is most pronounced when the RCT sample size is small. In these settings, external data play a essential role in stabilizing model estimation and improving the ability to identify treatment beneficiaries. Meanwhile, borrowing external data does not guarantee high power in all scenarios. When the baseline power using RCT data alone is very low, additional external data may provide limited benefit. This behavior is most evident for linear regression and Bayesian model averaging, which struggle to accurately capture complex outcome structures. As a result, their power remains relatively low regardless of whether external data are incorporated.

Within each combination of sample size and covariate shift, power varies substantially across base learners. Under a linear outcome model with low noise ($\sigma=0.3$), linear regression and BMA perform comparably to more flexible methods such as Super Learner and BART, achieving power close to $0.9$ in favorable settings (e.g., $n=50$, $\eta=0.3$). However, this advantage disappears when the outcome model becomes nonlinear. In these cases, Super Learner and BART consistently outperform simpler models, reflecting their ability to adapt to complex and unknown outcome structures without relying on strong modeling assumptions. Increasing the noise level from $\sigma=0.3$ to $\sigma=0.6$ leads to a general decrease in power across all methods, as larger noise makes it harder to distinguish true treatment beneficiaries. Interestingly, covariate-dependent noise does not uniformly worsen performance compared to constantly large noise structure ($\sigma=0.6$). Instead, it affects more on linear regression and BMA under nonlinear outcome models, while not impact that much for random forest, Super Learner, and BART. Finally, across all scenarios, the Bonferroni procedure yields substantially lower power compared to BH, which results from the trade-off between conservative error control and the ability to detect meaningful treatment effects. Furthermore, we empirically assess this stochastic-ordering condition under this simulation settings in Supplementary Material~\ref{app:empirical_stoch_main}.

\subsubsection{Outcome drift}\label{sec:sim_outcomedrift}
As shown in Figure \ref{fig:FDR_all_Yshift_0.8}, across all scenarios, the BH procedure continues to control the FDR at the pre-specified level of $q=0.15$. This result aligns with the core design principle of our framework, which separates model training from hypothesis calibration. In particular, the calibration set is constructed exclusively from RCT data, providing a valid reference distribution that targets the population of interest. Therefore, the conformal $p$-values remain valid under the exchangeability assumption between the testing and calibration sets even if the external data contains some hidden bias. Importantly, although external data may contain hidden bias due to outcome drift, incorporating such data during training does not compromise the validity of the hypothesis testing procedure nor inflate the FDR.

\begin{figure}[ht]
    \centering
    \includegraphics[width=\linewidth]{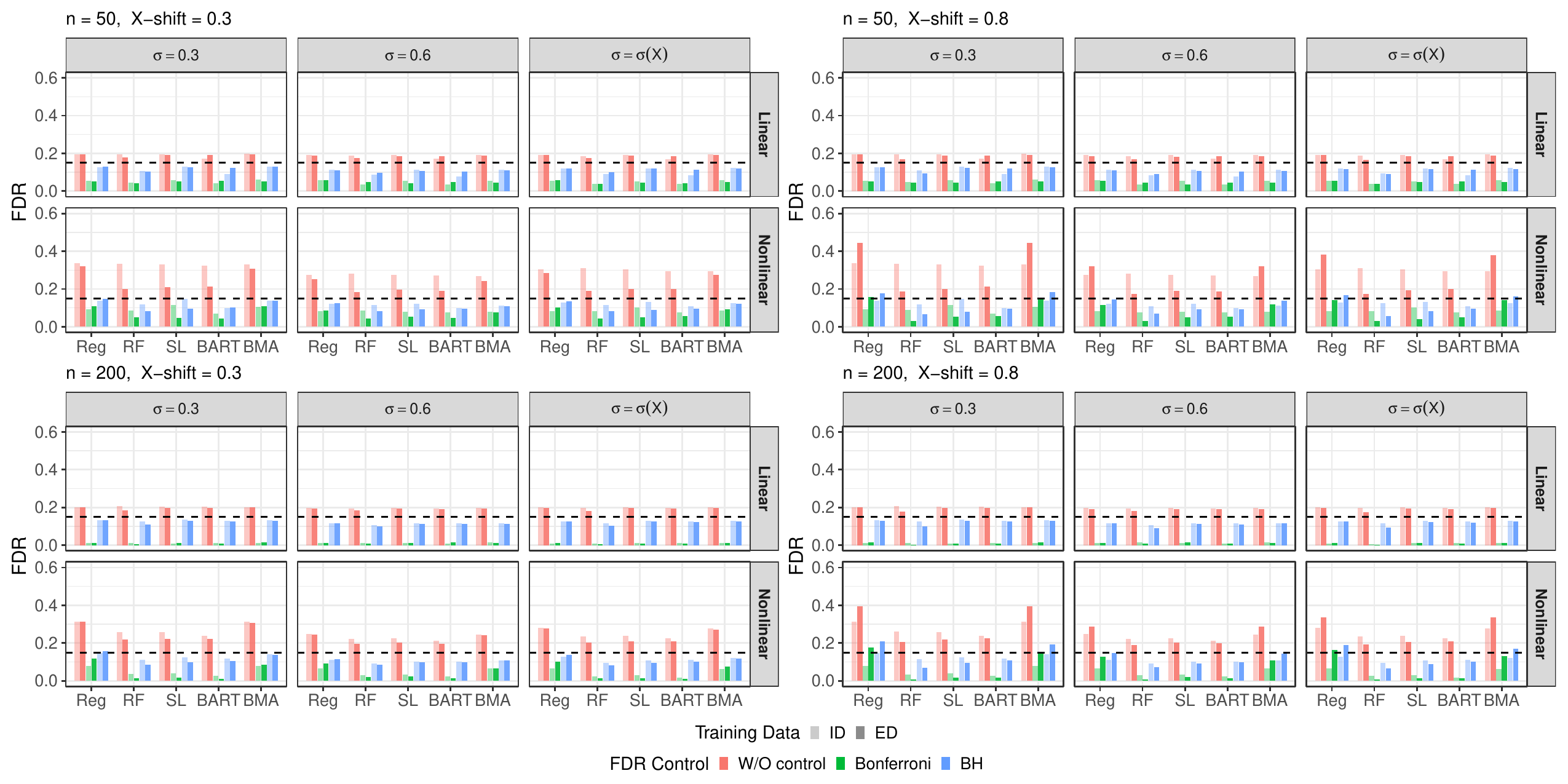}
    \caption{FDR across different scenarios given outcome drift ($b=0.8$)}
    \label{fig:FDR_all_Yshift_0.8}
\end{figure}

Correspondingly, Figure~\ref{fig:Power_all_Yshift_0.8} shows the power results. Borrowing external data during training continues to provide power gains, even in the presence of hidden bias. The relative performance of different base learners remains largely consistent with the covariate-shift-only setting. However, the overall magnitude of the power gain is slightly reduced compared to scenarios without outcome drift. This reduction is expected, as outcome drift introduces additional noise when external data are used for training, due to discrepancies in outcome distributions between the RCT and external data. Larger differences between the RCT and external data therefore limit the improvement in individual treatment effect estimation and beneficiary identification.

\begin{figure}[ht]
    \centering
    \includegraphics[width=\linewidth]{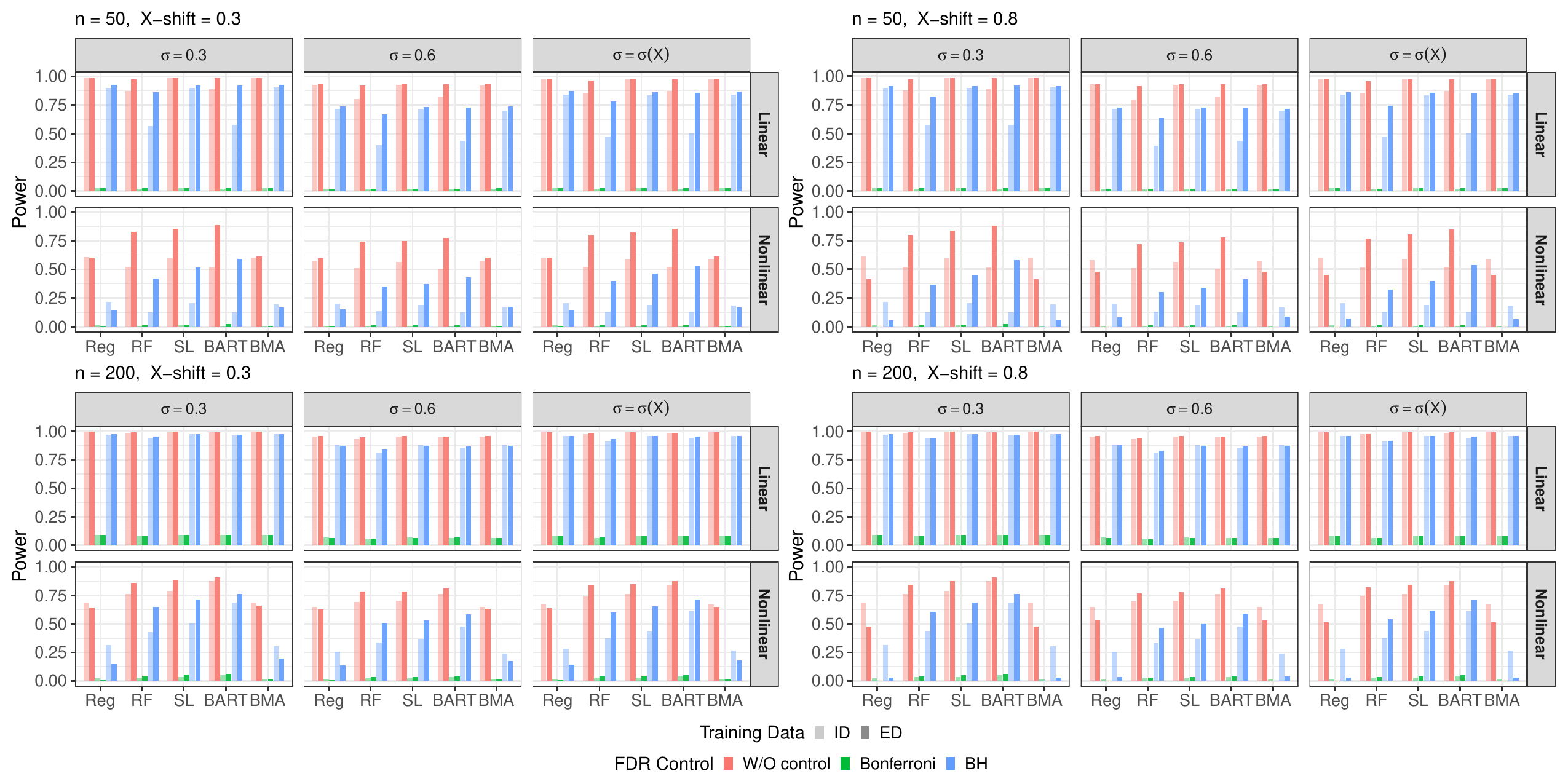}
    \caption{Power across different scenarios given outcome drift ($b=0.8$)}
    \label{fig:Power_all_Yshift_0.8}
\end{figure}

Additional simulation results with a smaller outcome drift magnitude ($b=0.4$) are provided in the Supplementary Material \ref{app:case_results}. These results lead to the same qualitative conclusions, while exhibiting larger power gains due to the weaker hidden bias and reduced noise in the external data.

\section{Case Study}\label{sec:case_study}
It has become widely debated whether lobectomy, the current standard-of-care, remains optimal for patients with early-stage non-small-cell lung cancer (NSCLC). Lobectomy involves the removal of an entire lung lobe after tumor detection. In contrast, limited (sublobar) resection, including segmentectomy and wedge resection, is a less invasive alternative that removes only part of the lobe. While lobectomy is associated with a lower risk of local recurrence, it also results in greater loss of lung capacity and higher surgical morbidity, such as chronic pain and long-term respiratory impairment \citep{Khan2024}. Limited resection is therefore often preferred for patients with poor pulmonary function or very small tumors. For NSCLC patients with small ($\leq 2$ cm) peripheral tumors, recent randomized trials have shown that sublobar resection is noninferior to lobectomy in terms of disease-free survival \citep{saji2022segmentectomy, Altorki2023, Altorki2024}. These findings raise an important clinical concern regarding potential overtreatment: for patients who do not obtain meaningful benefit from lobectomy, the standard-of-care treatment may unnecessarily remove healthy lung tissue and impair postoperative function, while offering little survival advantage over limited resection. 

Consequently, this clinical context naturally leads to an individualized decision problem: which patients can benefit from limited resection instead of the standard lobectomy? Addressing this question requires moving beyond population-level comparisons toward reliable patient-level treatment selection.  However, such individualized decisions are inherently challenging due to uncertainty in treatment effect estimation and the risk of incorrectly recommending a non-standard treatment. To address this challenge, the proposed conformal inference framework that identifies patients who are likely to benefit from limited resection while explicitly accounting for uncertainty could be very helpful. Following the proposed framework, we apply the BH procedure to control the FDR among selected candidates. Among patients identified as benefiting more from limited resection, the proportion who would in fact benefit more from standard-of-care is controlled at a pre-specified level. In this case study, we set \(q=0.20\), as the goal is to screen for patients who may benefit from the less invasive surgical option rather than to make definitive treatment recommendations. This provides a principled and conservative mechanism for treatment-benefit screening: limited resection is recommended only for patients with sufficiently strong evidence of benefit, while candidates without such evidence continue to receive the standard-of-care. In this way, the proposed framework illustrates how uncertainty-aware screening help reduce potential overtreatment while maintaining caution in individualized treatment-benefit interpretation.

CALGB 140503 is a multi-center Phase III non-inferiority trial that enrolled patients with peripheral, node-negative NSCLC tumors smaller than $2$ cm and randomized them to receive either lobectomy or limited/sublobar resection \citep{Altorki2023}. The primary analysis yielded a hazard ratio for disease-free survival of 1.01 with a $90\%$ confidence interval of $(0.83, 1.24)$ and a hazard ratio for overall survival of 0.95 with a $90\%$ confidence interval of $(0.72, 1.26)$, which supports the non-inferiority of sublobar resection relative to the lobectomy procedure. 

Although the randomized design provides high internal validity, the sample size of the trial is limited for reliably learning heterogeneous treatment effects, as subgroup and individual-level analyses are often underpowered after conditioning on covariates \citep{kent_predictive_2020}. Therefore, the incorporation of external data to enhance model training and improve covariate coverage would be helpful. To address this limitation, we incorporate external data from the National Cancer Database (NCDB) to augment model training \citep{Bilimoria2008NCDB}. In this case study, NCDB is used to help train the treatment-effect model, whereas the conformal calibration step remains anchored in the randomized CALGB 140503 data. Thus, the registry contributes information for learning potential benefit patterns, while the RCT calibration set determines how the evidence for selecting beneficiaries is evaluated. 

The NCDB is a large, hospital-based oncology registry that captures approximately $70\%$ of newly diagnosed cancer cases in the United States. The NCDB records lobectomy and limited resection as treatment groups and shares a rich set of baseline covariates with CALGB 140503, which enables the integration of randomized and real-world data for outcome modeling. Compared to the randomized trial, the NCDB provides substantially greater sample size and broader covariate support, which are essential for training flexible models to estimate individualized treatment effects. More importantly, recent analyses of NCDB data indicate that the use of sublobar resection, particularly segmentectomy, has increased over time among low-risk patients, with five-year overall survival shown to be non-inferior to lobectomy \citep{Deng2025Temporal}. These findings are broadly consistent with the primary results of CALGB 140503, which therefore encourages the use of RWD to augment randomized evidence in individual treatment effect modeling. The goal of this case study is to apply the proposed conformal inference framework to identify patients who are more likely to benefit from limited resection within a candidate population before receiving any treatment, thereby supporting precision surgical decision-making while limiting unnecessary overtreatment.

The CALGB 140503 trial contains 697 observations. After removing observations with missing tumor size or overall survival time equal to zero, 694 participants remain, including 355 patients who underwent lobectomy and 339 who received limited resection. Following the eligibility criteria described in \citep{Altorki2023}, we further refine the NCDB data to improve comparability with the trial population by excluding observations that fail to meet the trial eligibility criteria or whose baseline covariates fall outside the range observed in CALGB 140503. After this filtering process, 14,742 observations from NCDB remain.

\begin{figure}[ht]
    \centering
    \includegraphics[width=\linewidth]{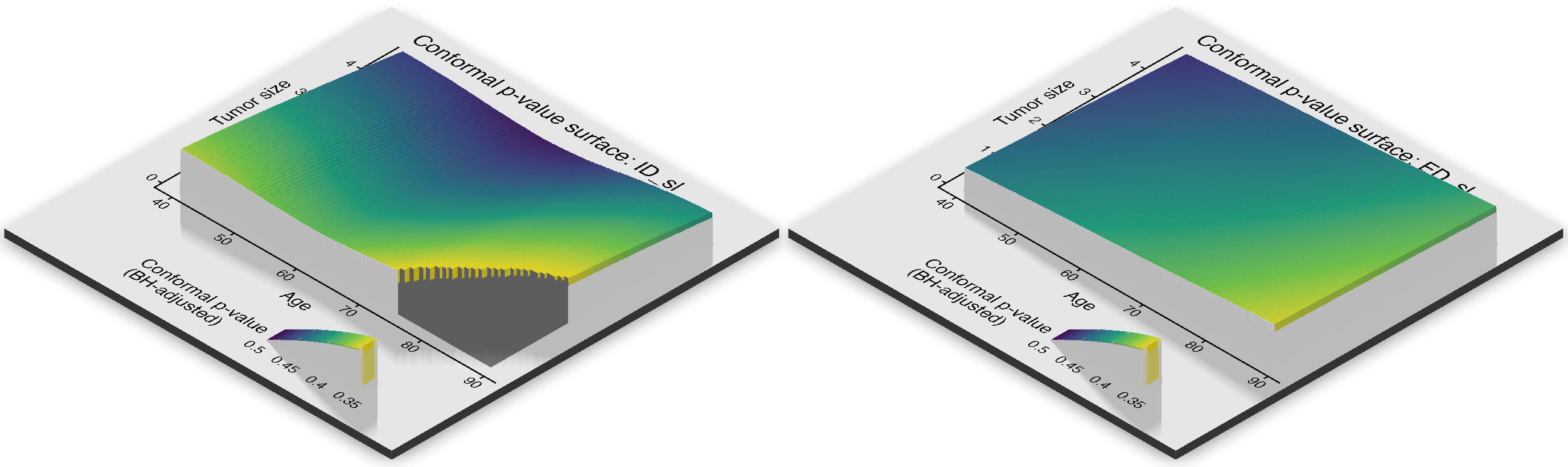}
    \caption{Comparison of conformal $p$-value surfaces under ID and ED when limited resection is treatment of interest and Super Learner is base learner}
    \label{fig:heatmap_sl_0.35}
\end{figure}

We use Restricted Mean Survival Time (RMST) as the outcome of interest. RMST is defined as $Y = \min(T, t^*)$, where $T$ denotes survival time and $t^*$ is a pre-specified truncation time. RMST represents the expected survival time restricted to $t^*$ and is commonly used as an alternative summary measure to the hazard ratio in survival analysis \citep{Uno2014}. In this case study, the CATE is defined as the difference in 5-year RMST between the two surgical arms given the covariate pattern $X=x$, i.e., $\tau(X=x)=\mathbb{E}\{Y(1)-Y(0)\mid X=x\}$ with $Y(a)=\min(T(a), 5)$ and $T(a)$ is the potential survival time under treatment $A=a$. To address right censoring in overall survival while maintaining compatibility with the potential outcomes framework, we transform the right-censored survival times into pseudo-RMST outcomes \citep{Andersen2010}. Because censoring may depend on baseline characteristics, we first examine the associations between baseline covariates and the censoring indicator. Based on this assessment, we estimate the censoring distribution using sex-stratified Kaplan-Meier curves within each data source. Within each data source and sex stratum, censoring is assumed to be independent of the underlying survival time. Pseudo-RMST values are then generated using the jackknife method implemented in the \texttt{eventglm} R package \citep{Sachs2022}. This procedure converts the censored survival outcomes in both CALGB 140503 and NCDB into 5-year pseudo-RMST values, which are treated as continuous outcomes in the subsequent analysis. Supplementary Material~\ref{app:empirical_stoch_caseStudy} further provides an empirical assessment of the stochastic-ordering condition in a case-study-like setting with censoring and pseudo-RMST construction.

The candidate population consists of patients who satisfy the eligibility criteria of CALGB 140503 before receiving any treatment, for whom only baseline covariates are observed. Such a population naturally arises in practice when identifying eligible patients for a follow-up study, such as an adaptive trial or a new trial targeting a specific subgroup. In this setting, the goal is to make treatment recommendations based solely on pre-treatment information. For illustration, we construct this candidate pool from NCDB. Specifically, after applying the CALGB 140503 eligibility criteria and restricting to the covariate range observed in the trial, we use nearest-neighbor matching on baseline covariates to identify NCDB patients most comparable to the CALGB 140503 trial population, and then randomly sample 800 patients from the matched NCDB population. After eligibility restriction and matching, the candidate group is considered as approximately exchangeable with the RCT calibration set.

Within the RCT data, we split the dataset into $40\%$ for model training, $40\%$ for the calibration set, and the remaining $20\%$ for nuisance model estimation required for constructing pseudo-individual treatment effects using the doubly-robust approach aforementioned. To improve the estimation of the CATE model, we randomly sample 3,000 observations from NCDB and combine them with the RCT training set. We implement the most robust and efficient base learners, which are BART and Super Learner, in this case study. Both training with and without the inclusion of NCDB data are investigated. Similar to the simulation study, we employ 5-fold cross-fitting to make efficient use of the available data. The baseline covariates included in the models are age, sex, race (white vs.\ non-white), tumor size, and histology (adenocarcinoma, squamous cell carcinoma, or other). For each random split and each base learner, we apply the BH procedure to the conformal \(p\)-values at target FDR level \(q=0.20\), yielding a split-specific set of selected candidates. Because the selected set may vary with the random split, we repeat the analysis 200 times and record the selection decision for each candidate under each base learner. These repeated-split results are summarized descriptively using selection frequencies to assess the stability of the identified beneficiary profiles. The formal FDR guarantee applies to the BH-selected set within each individual split, rather than to the aggregated selection frequencies.

\subsection{Case Study Results}
\begin{figure}[ht]
    \centering
    \includegraphics[width=\linewidth]{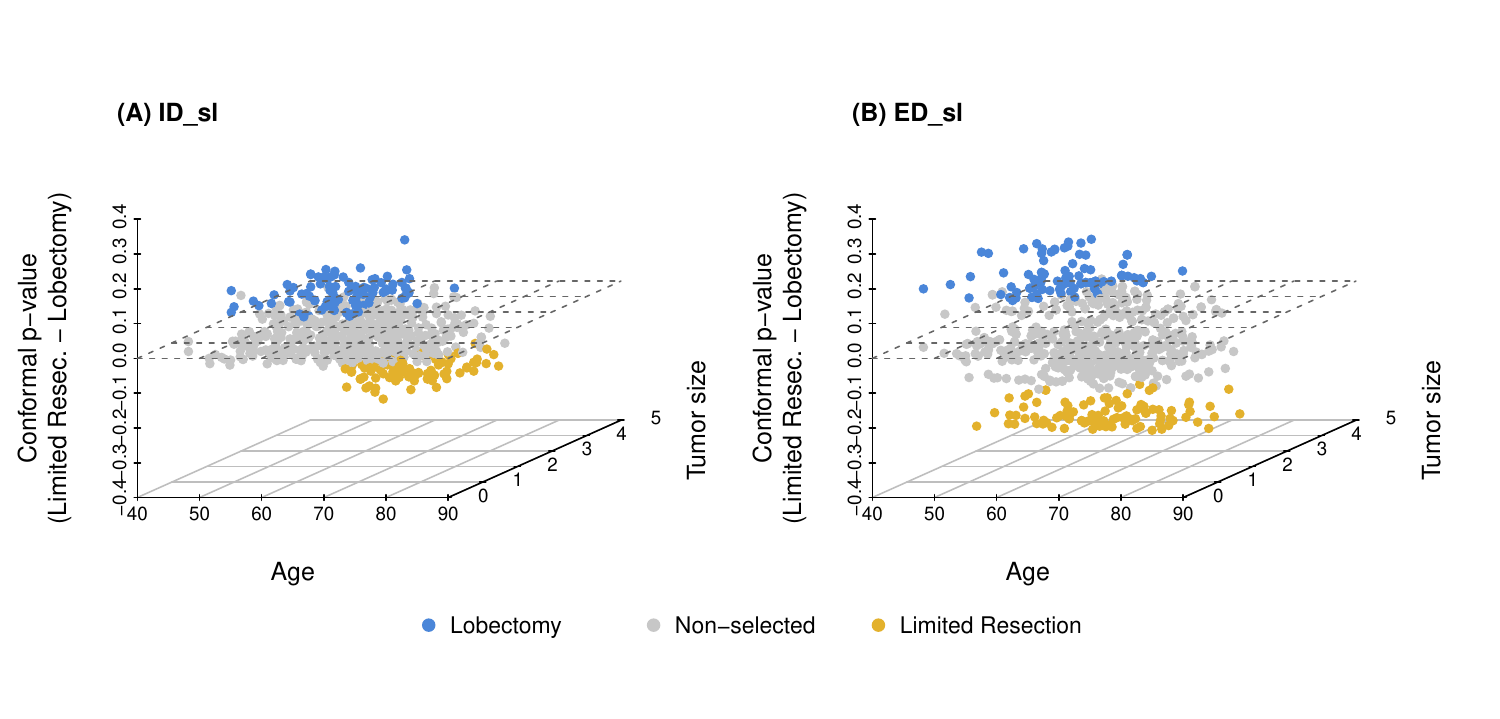}
    \caption{Distribution of conformal $p$-values regarding age and tumor sizes when using Super Learner as the base learner}
    \label{fig:sep_sl}
\end{figure}

Here, we present the case study results using Super Learner as the base learner, with corresponding BART results provided in Supplementary Material~\ref{app:case_results}. Figure \ref{fig:heatmap_sl_0.35} displays the BH-adjusted conformal $p$-value surfaces for testing $H_0:\tau(x)\le 0$, where $\tau(x)$ is defined as the CATE of limited resection relative to lobectomy in 5-year RMST. A smaller conformal $p$-value reflects stronger evidence that the covariate profile (or the patient subgroup) is associated with conditional benefit from limited resection. The gray regions, where BH-adjusted conformal $p$-values fall below $0.35$, are used only to highlight areas with relatively stronger evidence in the surface plot and are not used as the FDR-controlled selection rule. When training on RCT data alone (ID), the surface is notably irregular. Specifically, the signal for limited-resection benefit is present among younger patients, weakens in the middle age range, and reappears among older patients. Given the limited sample size of CALGB 140503 for estimating heterogeneous treatment effects, this pattern may reflect estimation instability rather than a clinically meaningful structure. Incorporating external NCDB data during model training (ED) yields a smoother surface, with stronger evidence for limited-resection benefit appearing among older patients and those with smaller tumors. This pattern is more stable and clinically plausible, which is aligned with what prior subgroup analyses have reported \citep{Altorki2023,Mao2025pub}. From this, the smoother ED surface may reflect improved covariate coverage and reduced estimation variability, which together may help stabilize the learned treatment effect surface and improve the interpretability of the results.

For descriptive visualization, we also examine candidate-level evidence in both treatment directions, corresponding to potential benefit from lobectomy and potential benefit from limited resection. Thereby, the candidate population is able to be summarized into three groups: candidates with stronger evidence of benefit from lobectomy, candidates with stronger evidence of benefit from limited resection, and candidates without sufficient evidence favoring either direction. For each treatment direction, directional conformal $p$-values are computed and the BH procedure is applied at an FDR level of $20\%$ to define the error-controlled selected set. Candidates selected in both directions are not assigned to either group and are instead classified as non-selected, because the evidence does not clearly favor one treatment direction. In addition, Figure~\ref{fig:sep_sl} displays the 100 candidates with the smallest $p$-values in each treatment direction.  This fixed-size display is descriptive and is intended only to visualize the separation between candidate profiles with stronger evidence for each treatment direction. In the figure, blue, yellow, and gray points correspond to the lobectomy-evidence, limited-resection-evidence, and non-selected groups, respectively.

Compared with training on RCT data alone (ID), incorporating external NCDB data (ED) produces a clearer separation between candidates with stronger evidence for the two treatment directions. In the ID analysis, candidates highlighted for lobectomy and limited resection are more concentrated around zero in the conformal $p$-value contrast, indicating weaker and less distinguishable signals. In the ED analysis, the highlighted candidates present a more noticeable separation, while non-highlighted candidates remain centered around zero. Therefore, leveraging external data enhances the ability to distinguish meaningful treatment effect signals from noise, leading to more stable and interpretable individualized treatment recommendations.

\section{Conclusion and Discussion}\label{sec:discussion}

To provide an uncertainty-aware individual-level selection rule before observing candidate outcomes, we propose a model-agnostic conformal-based multiple-testing framework for beneficiary identification based on estimated individual treatment effects. The primary goal is to identify a subgroup of individuals who are likely to benefit from the treatment of interest, and thus can be useful for assisting treatment selection in practice or guiding future trial enrollment. Unlike approaches that directly threshold estimated ITEs, our framework turns this estimation-based selection problem into a testing problem. Because the conformal layer is separated from the choice of treatment effect learner, the framework can be integrated with a broad class of statistical and machine learning models. By applying the Benjamini-Hochberg procedure to conformal $p$-values, we explicitly account for the multiplicity issue that naturally arises when selecting a subgroup from a larger candidate population. In this way, our proposed method translates pseudo-ITE estimates into valid selection decisions with uncertainty quantification and FDR control.

When the RCT sample size is limited, we further consider borrowing external data, such as RWD, to strengthen the model-training step. In our framework, external data are used to improve treatment effect estimation, while RCT-based conformal calibration provides the reference distribution for uncertainty-aware selection. As shown in the simulation studies, when the external data are informative and reasonably comparable to the RCT population, incorporating them can improve the prediction model and help select more true beneficiaries. In the case study, we further examine which baseline characteristic patterns are more likely to be associated with benefit from limited resection or lobectomy. The case study illustrates how the framework can use trial and external real-world data to identify candidates with evidence of benefit from limited resection.

This framework is particularly relevant for enrichment designs and precision medicine applications. For example, clinicians may hope to use a data-driven tool to support treatment decision making for individual patients, while trial investigators may wish to identify a subset of candidates for second-stage enrollment or for a follow-up study. In these settings, simply ranking or thresholding estimated ITEs may ignore the uncertainty in individual-level treatment effect estimation. The proposed conformal testing framework provides a more cautious alternative by selecting beneficiaries only when there is sufficient evidence that their treatment effects exceed clinically meaningful thresholds.

The proposed framework is most directly suited to settings where candidate subjects come from the same target population as the initial RCT, such as follow-up target trials, second-stage enrollment, or adaptive enrichment designs after an initial RCT. In these cases, exchangeability between the calibration and candidate samples is a reasonable working assumption. In practice, investigators may also wish to apply an RCT-derived screening rule to a different target population, which is closely related to the generalizability and transportability problem \citep{Stuart2015,Dahabreh2019,Lee2023,Liu2024,liu2025estimandfocus}. Although this setting is beyond the main scope of the present paper, one possible extension is weighted conformal inference, where calibration scores are reweighted toward the target population if density ratios, analogous to sampling scores in causal generalizability, can be estimated \citep{tibshirani2019conformal}. We provide pseudo-code for this possible extension in Algorithm~\ref{alg:weighted_conformal_screening} in Supplementary Material \ref{app:weighted_conformal}. When the candidate population differs from the RCT population in ways not fully captured by covariate-shift weighting, such as temporal changes or distribution drift, additional robust calibration tools may also be needed \citep{barber2023conformal}.

A common concern in RWD-assisted learning is that external data may differ systematically from the RCT population because of differences in patient mix, treatment assignment, outcome measurement, or temporal effects. These differences may affect the nuisance models or treatment effect models learned during training. In the current framework, however, external data are used only for model training, and conformal calibration remains anchored in the RCT data. Therefore, potential external bias mainly affects efficiency and power, rather than directly changing the calibration reference distribution for uncertainty-aware selection. We do not use a test-then-pool \citep{yang2023elastic} or selective borrowing method \citep{Gao2024,Zhu2024,liu2025hybridControl}, because using the same RCT data both to decide borrowing and to perform beneficiary selection may introduce a double-dipping concern and make downstream error control harder to interpret. Instead, we take a more transparent approach by trimming the external data according to the published RCT eligibility criteria before model training, which is more realistic  when investigators want to put the framework into practice. More adaptive borrowing strategies, such as including a data-source indicator in flexible learners to adjust borrowing depending on external data similarity \citep{zhou2021incorporating}, may further improve efficiency, but they can remain implicitly data-driven and model-dependent. How to balance such efficiency gains with valid downstream selection remains an important direction for future work.

In this work, FDR control is achieved by applying the BH procedure to a fixed batch of candidate subjects, which is natural for enrichment studies or follow-up trials where the candidate pool is available before selection. In some clinical settings, candidates may arrive sequentially, and treatment or enrollment decisions may need to be made in real time. In such settings, each candidate's conformal $p$-value can still be computed in real time because it depends only on the fixed RCT calibration set and not on other candidates. However, the BH step itself is a batch procedure. Real-time treatment or enrollment decisions would therefore require replacing BH with an online FDR procedure, such as alpha-investing-based rules including LORD, LOND, SAFFRON, and ADDIS \citep{javanmard2015, chen2017sequentialmultipletesting, Ramdas2017, javanmard2018online, Ramdas2018, tian2019addis}. Extending the proposed framework to online FDR control is therefore a useful future direction.
Finally, the validity of conformal $p$-values is marginal, rather than conditional on each covariate value. In other words, the guarantee holds over the joint distribution of calibration and candidate subjects, but it does not ensure exact validity for every fixed $X=x$. This is expected because the conformal $p$-value is calibrated using the available calibration sample as a whole. If subgroup-specific guarantees are of interest, one can calibrate within pre-specified subgroups, such as sex, histology, or other categorical clinical variables. For continuous covariates, clinically meaningful bins can be defined before calibration so that subgroup-wise calibration can still be applied. However, for arbitrary continuous covariate values, exact finite-sample conditional validity is generally difficult to maintain without additional assumptions \citep{angelopoulos2026theoreticalfoundations}.

\section{Funding}
This project is supported by the Food and Drug Administration (FDA) of the U.S. Department of Health and Human Services (HHS) as part of a financial assistance award U01FD007934 funded by FDA/HHS. It is also supported by the National Institute on Aging of the National Institutes of Health under Award Number R01AG06688. The contents are those of the authors and do not necessarily represent the official views of, nor an endorsement by, the FDA/HHS, the National Institutes of Health, or the U.S. Government.

\section{Conflicts of Interest}
No conflict of interest.

\section{Data Availability}
The complete replication codes and simulated data are available at \url{https://github.com/jiajunliu0511/ConfSelection}. The NCDB data may be made available upon reasonable request, whereas the CALGB 140503 data cannot be shared due to restricted access.

\bibliographystyle{plainnat}
\bibliography{ref}

\appendix

\setcounter{equation}{0}
\renewcommand{\theequation}{S\arabic{equation}}
\setcounter{table}{0}
\renewcommand{\thetable}{S\arabic{table}}
\setcounter{figure}{0}
\renewcommand{\thefigure}{S\arabic{figure}}
\setcounter{theorem}{0}
\renewcommand{\thetheorem}{S\arabic{theorem}}
\setcounter{lemma}{0}
\renewcommand{\thelemma}{S\arabic{lemma}}
\setcounter{remark}{0}
\renewcommand{\theremark}{S\arabic{remark}}

\newpage
\begin{center}
  {\large\bf SUPPLEMENTARY MATERIAL}
\end{center}

\newcounter{rmk}
\newcommand\rmk[1]{\vspace*{1mm} \par \stepcounter{rmk}{\noindent \bf Remark \thermk}. {#1}\vspace*{1mm}}

\section{From ITE coverage to valid conformal \texorpdfstring{$p$}{p}-values}\label{app:coverage_to_pvalue}
In this section, we show how a valid one-sided conformal prediction set for the true individual treatment effect (ITE) induces a valid conformal $p$-value. This connection is needed because the conformal meta-learner result of \citet{alaa2024conformal} is stated in terms of prediction interval coverage, whereas our selection procedure is based on conformal $p$-values.

Following the notation in the main text, for candidate individual $j$, we write $\tau_j = \tau(X_{n+m+j})$ for the true treatment-benefit function evaluated at the candidate's covariates. We consider the one-sided null hypothesis $H_{0j}:\tau_j \le c$, where $c$ is a pre-specified clinically meaningful threshold. For simplicity, we take
$c_j\equiv c$ throughout this section.

Suppose that, under the stochastic-ordering conditions of \citet{alaa2024conformal}, the pseudo-outcome conformal procedure yields a one-sided prediction set for the true ITE of the form $\hat C_\alpha(X_j)= [L_\alpha(X_j),\infty),$ satisfying the marginal coverage guarantee 
\begin{equation*}
    \mathbb P\{\tau_j\in \hat C_\alpha(X_j)\}=\mathbb P\{\tau_j \ge L_\alpha(X_j)\} \ge 1-\alpha,
\end{equation*}
Equivalently,
\[
    \mathbb P\{\tau_j < L_\alpha(X_j)\}\le \alpha.
\]
For each $\alpha\in[0,1]$, this lower conformal bound induces a one-sided level-$\alpha$ test of $H_{0j}$ that rejects whenever the entire conformal lower set lies above the null threshold, i.e., $c < L_\alpha(X_j)$.

Assume that the resulting rejection rule is monotone in $\alpha$, so that rejection at a smaller nominal level implies rejection at any larger nominal level. This monotonicity holds for the usual conformal quantile construction of $L_\alpha(X_j)$. By inverting this family of tests, we define the conformal $p$-value as
\begin{equation*}
    p_j=\inf\{\alpha\in[0,1]: c < L_\alpha(X_j)\}.
\end{equation*}
Thus, $p_j$ is the smallest nominal level at which the one-sided conformal test rejects the null hypothesis.

\begin{lemma}[Coverage implies generalized validity of the conformal $p$-value]
\label{lem:coverage_to_pvalue}
If the one-sided conformal set satisfies
\[
    \mathbb P\{\tau_j < L_\alpha(X_j)\}\le \alpha
\]
for every $\alpha\in[0,1]$, then the conformal $p$-value defined above satisfies
\[
    \mathbb P\{p_j\le \alpha,\ \tau_j\le c\}\le \alpha,
    \qquad \forall \alpha\in[0,1].
\]
\end{lemma}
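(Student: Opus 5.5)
The plan is to reduce the event $\{p_j\le\alpha,\ \tau_j\le c\}$ to a miscoverage event of the one-sided conformal set at a nominal level slightly above $\alpha$. We then apply the coverage hypothesis and let that level decrease to $\alpha$. Throughout we use the monotonicity of the rejection rule assumed just before the statement: if $c<L_{\alpha'}(X_j)$ and $\alpha'\le\alpha''$, then $c<L_{\alpha''}(X_j)$. Equivalently, the rejection region $\{\alpha: c<L_\alpha(X_j)\}$ is an up-set (an interval of the form $(p_j,1]$ or $[p_j,1]$).

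First, fix $\alpha\in[0,1)$ and any $\alpha'\in(\alpha,1]$. On the event $\{p_j\le\alpha\}$, the definition of $p_j$ as an infimum gives some $\tilde\alpha<\alpha'$ with $c<L_{\tilde\alpha}(X_j)$. Monotonicity then yields $c<L_{\alpha'}(X_j)$. On the additional event $\{\tau_j\le c\}$ this gives $\tau_j\le c<L_{\alpha'}(X_j)$, that is, the miscoverage event $\{\tau_j<L_{\alpha'}(X_j)\}$. Hence
\[
\{p_j\le\alpha,\ \tau_j\le c\}\subseteq\{\tau_j<L_{\alpha'}(X_j)\},
\]
and the coverage hypothesis gives $\mathbb P\{p_j\le\alpha,\ \tau_j\le c\}\le\alpha'$.

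Second, let $\alpha'\downarrow\alpha$. The left-hand side does not depend on $\alpha'$, so $\mathbb P\{p_j\le\alpha,\ \tau_j\le c\}\le\alpha$. The case $\alpha=1$ is trivial because the probability is at most $1$. A measurability remark is needed: $p_j$ is a random variable because the rejection region is an up-set, so $\{p_j\le\alpha\}=\bigcap_{\alpha'>\alpha,\ \alpha'\in\mathbb Q}\{c<L_{\alpha'}(X_j)\}$, a countable intersection of events.

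The main obstacle is the boundary case, where the infimum need not be attained: $p_j\le\alpha$ does not by itself give rejection at level exactly $\alpha$. This is why the argument passes through $\alpha'>\alpha$ and a limit, rather than asserting $c<L_\alpha(X_j)$ directly. This limiting step needs only the hypothesis at every level $\alpha'$, which the statement provides. No right-continuity of $\alpha\mapsto L_\alpha$ is required, so the limiting argument is the route we take.
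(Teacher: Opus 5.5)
Your proof is correct and follows the paper's argument: on the null, the event $\{p_j\le\alpha\}$ is contained in the miscoverage event $\{\tau_j<L(X_j)\}$ at the relevant level, and the coverage hypothesis then bounds its probability. The one difference is that the paper asserts that $p_j\le\alpha$ directly gives $c<L_\alpha(X_j)$, which silently requires the infimum defining $p_j$ to be attained (true for the usual right-continuous conformal-quantile construction), whereas your step through $\alpha'>\alpha$ followed by letting $\alpha'\downarrow\alpha$ removes that requirement and closes the boundary case $p_j=\alpha$.
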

\begin{proof}
Fix $\alpha\in[0,1]$, by the definition of the conformal $p$-value defined in \eqref{eq:pvalue_deterministic}, the event $\{p_j\le \alpha\}$ implies
\[
    c < L_\alpha(X_j).
\]
If the null hypothesis is true, i.e., $\tau_j\le c$, then
\[
    \tau_j\le c < L_\alpha(X_j).
\]
Therefore,
\[
    \{p_j\le \alpha,\ \tau_j\le c\}
    \subseteq
    \{\tau_j < L_\alpha(X_j)\}.
\]
Taking probabilities on both sides gives
\[
    \mathbb P\{p_j\le \alpha,\ \tau_j\le c\}
    \le
    \mathbb P\{\tau_j < L_\alpha(X_j)\}
    \le \alpha,
\]
where the last inequality follows from the one-sided conformal coverage
guarantee. 
\end{proof}

\section{False discovery rate control for pseudo-ITE conformal selection}\label{app:FDR_control}
In this section, we establish FDR control for the proposed selection procedure. The previous section shows that a valid one-sided conformal prediction set for the true ITE induces a generalized-valid conformal $p$-value for testing
$H_{0j}: \tau_j \le c_j,$ where $\tau_j=\tau(X_{n+m+j})$ and $c_j$ is a pre-specified clinical threshold. 

For notational clarity, we distinguish the following scores. Let $n_c=|\mathcal D_{\mathrm{Calib}}|$ denote the calibration sample size and let $n_t$ denote the number of candidate individuals. For candidate $j=1,\ldots,n_t$, define
\[
    \tau_j=\tau(X_{n+m+j}), \qquad
    \hat V_j = V(X_{n+m+j},c_j), \qquad
    V_j^* = V(X_{n+m+j},\tau_j).
\]
Here, $\hat V_j$ is the null-imputed boundary score used to construct the conformal $p$-value, while $V_j^*$ is the oracle candidate score that would be available if the true treatment benefit $\tau_j$ were observed.
For calibration subject $i\in\mathcal D_{\mathrm{Calib}}$, let
\[
    V_i = V(X_i,Y_i'), \qquad
    V_i^* = V(X_i,\tau(X_i)),
\]
where $V_i$ denotes the pseudo-outcome-based calibration score used in the implementable procedure and $V_i^*$ is the corresponding oracle calibration score. 

In the main text, we define the conformal $p$-value
\[
    p_j=\frac{\sum_{i=1}^{n_c}\mathbb{I}\{V_i<\hat V_j\}+ U_j\left(1+\sum_{i=1}^{n_c}\mathbb{I}\{V_i=\hat V_j\}\right)}{n_c+1},
\]
where $U_j\sim\mathrm{Unif}(0,1)$ is used for tie-breaking. For the finite-sample exchangeability proof, following the deterministic version of \citet{Jin2023SelectionbyConP}, we use

\begin{equation}
\label{eq:pvalue_deterministic}
    p_j^{\mathrm{d}}
    =
    \frac{1+\sum_{i=1}^{n_c}\mathbb I\{V_i<\hat V_j\}}{n_c+1}.
\end{equation}
Under the no-ties condition assumed in Theorem~\ref{thm:fdr}, this deterministic $p$-value is a conservative rank-based version that avoids auxiliary tie-breaking randomness. For notational simplicity, we write \(p_j\equiv p_j^{\mathrm{d}}\) throughout this section. We now show how these $p$-values can be adjusted through Benjamini--Hochberg (BH) procedure to control the false discovery rate (Theorem~\ref{thm:fdr})

\begin{proof}[Proof of Theorem~\ref{thm:fdr}]
Throughout the proof, we condition on the training stage, including nuisance estimation, pseudo-outcome construction, and the fitted treatment effect model. Thus, the conformal score function $V$ and the fitted functions entering the conformal scores are treated as fixed.
Following the notations aforementioned, $\hat V_j=V(X_{n+m+j},c_j)$ is the boundary score evaluated under null and $V_j^*=V(X_{n+m+j},\tau_j)$ is the oracle score evaluated at unknown $\tau_j=\tau(X_{n+m+j})$. For calibration subjects, let \(\{V_i:i=1,\ldots,n_c\}\) and \(i\in \mathcal{D}_{\mathrm{Calib}}\), denote the pseudo-outcome-based calibration conformal scores used in the conformal ranking step. 

First, when assume there is no tie among $\{V_i:i=1,\ldots,n_c\} \cup \{\hat V_{\ell}:\ell=1,\ldots,n_t\}$, we define the oracle $p$-value as 
    \begin{equation}\label{eq:pvalue_oracle}
        p_j^*=\frac{1+\sum_{i=1}^{n_c}\mathbb{I}\{V_i<{V}_j^*\}}{n_c+1},
    \end{equation}
where $n_c=|\mathcal D_{\mathrm{Calib}}|$.
For candidates $j=1,\ldots,n_t$ and $\ell\ne j$, we define a set of modified $p$-values 
    \begin{equation}\label{eq:pvalue_modified}
        p_\ell^{(j)}=\frac{\sum_{i=1}^{n_c}\mathbb{I}\{V_i<\hat{V}_\ell\}+\mathbb{I}\{{V}^*_{j}<\hat{V}_\ell\}}{n_c+1},
    \end{equation}
which is used for proof only and not for real practice. It is the conformal rank of the score $\hat{V}_\ell$ when the reference set is enlarged from $\{V_i:i\in\mathcal D_{\mathrm{Calib}}\}$ to $\{V_i:i\in\mathcal D_{\mathrm{Calib}}\}\cup\{V_j^*\}.$

For any vector of candidate $p$-values $(a_1,\ldots,a_{n_t})$, let $\mathcal R(a_1,\ldots,a_{n_t})\subseteq\{1,\ldots,n_t\}$ denote the rejection set returned by the BH procedure at level $q$ when applied to this vector. The original rejection set is
    \[
    \mathcal R=\mathcal R(p_1,\ldots,p_{n_t}).
    \]
We compare $\mathcal R$ with the modified rejection set
    \[
    \mathcal R_j^* = \mathcal R (p_1^{(j)},\ldots,p_{j-1}^{(j)},p_j^*,p_{j+1}^{(j)},\ldots,p_{n_t}^{(j)}).
    \]

Consider the event
\[
    \{\tau_j=\tau(X_{n+m+j})\le c_j,\ j\in\mathcal R\}.
\]
On this event, monotonicity of $V$ in its second argument gives
\[
    V_j^*=V(X_{n+m+j},\tau_j) \le V(X_{n+m+j},c_j)=\hat V_j.
\]
Therefore, based on $p_j^*$ and $p_j$ defined in \eqref{eq:pvalue_deterministic} and \eqref{eq:pvalue_oracle}, we have $p_j^*\le p_j$. 

Now, we consider any other candidates $\ell\ne j$. Using the modified $p$-value defined in \eqref{eq:pvalue_modified}, we consider two cases for any $\ell\ne j$. First, if $\hat V_\ell>\hat V_j$, then $V_j^*\le \hat V_j<\hat V_\ell$, so
\[
    p_\ell^{(j)}=
    \frac{1+\sum_{i=1}^{n_c}\mathbb{I}\{V_i<\hat V_\ell\}}{n_c+1}=p_\ell.
\]
Second, if $\hat V_\ell<\hat V_j$, then the deterministic conformal $p$-value is nondecreasing in the candidate boundary score, so
\[
    p_\ell=\frac{1+\sum_{i=1}^{n_c}\mathbb{I}\{V_i<\hat V_\ell\}}{n_c+1}
    \le\frac{1+\sum_{i=1}^{n_c}\mathbb{I}\{V_i<\hat V_j\}}{n_c+1} =p_j.
\]
Since $j\in\mathcal R$ and $p_\ell\le p_j$, the BH step-up rule implies
$\ell\in\mathcal R$. Therefore,
\begin{align*}
    p_\ell^{(j)}
    &= \frac{\sum_{i=1}^{n_c}\mathbb{I}\{V_i<\hat{V}_\ell\}+\mathbb{I}\{{V}^*_{j}<\hat{V}_\ell\}}{n_c+1}\\
    &\le \frac{\sum_{i=1}^{n_c}\mathbb{I}\{V_i<\hat{V}_\ell\}+1}{n_c+1} \qquad \text{(by indicator term is at most 1)}\\
    &= p_\ell
    \le p_j =\frac{\sum_{i=1}^{n_c}\mathbb{I}\{V_i<\hat{V}_j\}+1}{n_c+1}
\end{align*} 
For Case 1, $\hat V_\ell>\hat V_j$ implies $p_\ell\ge p_j$ and $p_\ell^{(j)}=p_\ell$. Thus, all $p$-values for other candidates $\ell\ne j$ at or above $p_j$ remain unchanged after the replacement. For Case 2, $\hat V_\ell<\hat V_j$ implies $p_\ell\le p_j$ and $p_\ell^{(j)}\le p_\ell\le p_j$. Thus, all modified $p$-values corresponding to scores below $\hat V_j$ remain no larger than $p_j$. Combining the two cases, replacing $p_\ell$ by $p_\ell^{(j)}$ for all $\ell\ne j$ does not change the ordering of the $p$-values around the reference value $p_j$. All $p$-values larger than $p_j$ remain unchanged, while all $p$-values no larger than $p_j$ remain no larger than $p_j$ after replacement. In addition, on the event $\{\tau_j\le c_j,\ j\in\mathcal R\}$, monotonicity gives $p_j^*\le p_j$. Since $j$ is already rejected, decreasing its $p$-value from $p_j$ to $p_j^*$ cannot remove any rejection under the BH step-up rule. Therefore, by the step-up nature of the BH procedure, this replacement does not
change the rejection set on the event $\{\tau_j\le c_j,\ j\in\mathcal R\}$, which means
\begin{align}\label{eq:rejection_set_equal}
    \mathcal R
    &=\mathcal R(p_1^{(j)},\ldots,p_{j-1}^{(j)},p_j,p_{j+1}^{(j)},\ldots,p_{n_t}^{(j)})\notag\\ 
    &=\mathcal R(p_1^{(j)},\ldots,p_{j-1}^{(j)},p_j^*,p_{j+1}^{(j)},\ldots,p_{n_t}^{(j)})=:\mathcal R_j^*
\end{align}
    
Then, let $R_j=\mathbb{I}\{j\in \mathcal{R}\}$ as the rejection indicator for candidates $j=1,\ldots,n_t$, and $|\mathcal{R}|=\sum_{j=1}^{n_t}R_j$. As there must exist a $r\in [0,n_t]$ such that $|\mathcal{R}|=r$, so $\sum_{r=0}^{n_t}\mathbb{I}\{|\mathcal{R}|=r\}=1$. We start from the definition of FDR
\begin{align*}
    \mathrm{FDR} 
    &=\mathbb E\left[\frac{\sum_{j=1}^{n_t}\mathbb {I}\{\tau_j\le c_j\}R_j}{1\vee \sum_{j=1}^{n_t}R_j}\right]  \\
    &=\mathbb E\left[\sum_{r=0}^{n_t} \frac{\sum_{j=1}^{n_t}\mathbb {I}\{\tau_j\le c_j\}R_j}{1\vee \sum_{j=1}^{n_t}R_j}\mathbb{I}\{|\mathcal{R}|=r\}\right]
\end{align*}
If $r=0$, then $\sum_{j=1}^{n_t}\mathbb {I}\{\tau_j\le c_j\}R_j=0$ as $R_j\equiv0$ for all $j$. If $r\ge 1$, then $1\vee \sum_{j=1}^{n_t}R_j=r$. Therefore, we can rewrite FDR as
\begin{align*}
    \mathrm{FDR} 
    &=\mathbb E\left[\sum_{r=1}^{n_t} \frac{\sum_{j=1}^{n_t}\mathbb {I}\{\tau_j\le c_j\}R_j}{1\vee \sum_{j=1}^{n_t}R_j}\mathbb{I}\{|\mathcal{R}|=r\}\right]\\
    &=\sum_{j=1}^{n_t}\sum_{r=1}^{n_t}\frac{1}{r}\mathbb E\left[\mathbb {I}\{\tau_j\le c_j\}R_j\mathbb{I}\{|\mathcal{R}|=r\}\right]\\
    &=\sum_{j=1}^{n_t}\sum_{r=1}^{n_t}\frac{1}{r}\mathbb E\left[\mathbb {I}\{\tau_j\le c_j\}\mathbb{I}\{|\mathcal{R}|=r\}\mathbb{I}\{p_j\le qr/n_t\}\right] \\
    &\qquad\qquad\qquad\qquad\text{(by BH procedure, $R_j=1\Leftrightarrow p_j\le q|\mathcal{R}|/n_t$)}\\
    &\leq \sum_{j=1}^{n_t}\sum_{r=1}^{n_t}\frac{1}{r}\mathbb E\left[\mathbb {I}\{\tau_j\le c_j\}\mathbb{I}\{|\mathcal{R}_j^*|=r\}\mathbb{I}\{p_j^*\le qr/n_t\}\right]\\
    & \qquad\qquad\qquad\qquad\text{(by $\mathcal R=\mathcal R_j^*$ in \eqref{eq:rejection_set_equal} and $p_j^*\le p_j$ shown above on \(\{\tau_j\le c_j,\ j\in\mathcal R\}\))}\\
    & \leq \sum_{j=1}^{n_t}\sum_{r=1}^{n_t}\frac{1}{r}\mathbb E\left[\mathbb{I}\{|\mathcal{R}_j^*|=r\}\mathbb{I}\{p_j^*\le qr/n_t\}\right]
\end{align*}
Next, by the BH procedure applied to the modified rejection set \(\mathcal R_j^*\), on the event \(\{|\mathcal R_j^*|=r\}\),
\begin{align*}
     p_j^*\le qr/n_t\quad\Longleftrightarrow\quad j\in\mathcal R_j^*,
\end{align*}
and thus
\begin{align*}
    \mathrm{FDR} \leq \sum_{j=1}^{n_t}\sum_{r=1}^{n_t}\frac{1}{r}\mathbb E\left[\mathbb{I}\{|\mathcal{R}_j^*|=r\}\mathbb{I}\{j\in \mathcal{R}_j^*\}\right].
\end{align*}
Finally, because replacing the already rejected \(p\)-value \(p_j^*\) by zero does not change the rejection set, therefore, on the event \(\{j\in\mathcal R_j^*\}\), 
\begin{align*}
    \mathcal R_j^*
    &=\mathcal R(p_1^{(j)},\ldots,p_{j-1}^{(j)},p_j^*,p_{j+1}^{(j)},\ldots,p_{n_t}^{(j)})\\
    &=\mathcal R(p_1^{(j)},\ldots,p_{j-1}^{(j)},0,p_{j+1}^{(j)},\ldots,p_{n_t}^{(j)})=:\mathcal R_{j\to 0}^*.
\end{align*}
Thus,
\begin{align*}
    \mathrm{FDR}
&\le \sum_{j=1}^{n_t}\sum_{r=1}^{n_t} \frac{1}{r} \mathbb E \left[\mathbb{I}\{|\mathcal R_{j\to0}^*|=r\}\mathbb{I}\{p_j^*\le q|\mathcal R_{j\to0}^*|/n_t\}\right] \\
&=\sum_{j=1}^{n_t}\mathbb E\left[\frac{\mathbb{I}\{p_j^*\le q|\mathcal R_{j\to0}^*|/n_t\}}{1\vee |\mathcal R_{j\to0}^*|}\right].
\end{align*}

Next, in order to show $\mathrm{FDR}\le q$, we only need to show 
\begin{equation}\label{eq:med_ineq_proof}
    \mathbb E\left[\frac{\mathbb{I}\{p_j^*\le q|\mathcal R_{j\to0}^*|/n_t\}}{1\vee |\mathcal R_{j\to0}^*|}\right]\le \frac{q}{n_t}.
\end{equation}

As we set the $p_j^*$ to zero, the rejection set $\mathcal R_{j\to0}^*$ no longer depends on $p_j^*$. For fixed $j\in\{1,\ldots,n_t\}$, define the unordered set
\[
    \mathcal U_j=[V_1,\ldots,V_{n_c},V_j^*],
\]
where $V_j^*=V(X_{n+m+j},\tau_j)$ is the oracle score for candidate $j$. By construction in \eqref{eq:pvalue_modified}, for each $\ell\ne j$, $p_\ell^{(j)}$ depends on $\{V_1,\ldots,V_{n_c},V_j^*\}$ only through their unordered values. Therefore, $\{p_\ell^{(j)}:\ell\ne j\}$ is invariant to any permutation of $ \{V_1,\ldots,V_{n_c},V_j^*\}$.

By Assumption~\ref{as:score_exchangeability}, the scores $\{V_1,\ldots,V_{n_c},V_j^*\}$ are exchangeable conditional on the boundary scores $\{\hat V_\ell:\ell\ne j\}.$ Thus, conditional on the unordered multiset $\mathcal U_j$ and the boundary scores $\{\hat V_\ell:\ell\ne j\}$, the rank of $V_j^*$ among $\{V_1,\ldots,V_{n_c},V_j^*\}$ is uniformly distributed. Equivalently, the oracle conformal $p$-value $p_j^*$ is conditionally super-uniform. This means that, for any random variable $t$ measurable with respect to $\mathcal U_j$ and $\{\hat V_\ell:\ell\ne j\}$,
\begin{equation}\label{eq:exch}
    \mathbb P\left(p_j^*\le t\,\middle|\,\mathcal U_j,\{\hat V_\ell:\ell\ne j\}\right)\le t.
\end{equation}

Next, because $\mathcal R_{j\to 0}^*$ is computed from $(p_1^{(j)},\ldots,p_{j-1}^{(j)},0,p_{j+1}^{(j)},\ldots,p_{n_t}^{(j)})$, $\mathcal R_{j\to 0}^*$ depends only on $\{p_\ell^{(j)}:\ell\ne j\},$. By construction, each $p_\ell^{(j)}$ is a function of $\mathcal U_j$ and $\{\hat V_\ell:\ell\ne j\}$. Therefore, $\mathcal R_{j\to 0}^*$ depends only on $\mathcal U_j$ and $\{\hat V_\ell:\ell\ne j\}$, but not on $p_j^*$. Then, $|\mathcal R_{j\to 0}^*|$ is measurable with respect to
$\mathcal U_j$ and $\{\hat V_\ell:\ell\ne j\}$. 

By the tower property,
\begin{align*}
\mathbb E\left[\frac{\mathbb I\{p_j^*\le q|\mathcal R_{j\to 0}^*|/n_t\}}{1\vee |\mathcal R_{j\to 0}^*|}\right] &=\mathbb E\left[\mathbb E\left\{\frac{\mathbb I\{p_j^*\le q|\mathcal R_{j\to 0}^*|/n_t\}}{1\vee |\mathcal R_{j\to 0}^*|}\,\middle|\,\mathcal U_j,\{\hat V_\ell:\ell\ne j\}\right\}\right]\\
&=\mathbb E\left[\frac{\mathbb P\left(p_j^*\le q|\mathcal R_{j\to 0}^*|/n_t\,\middle|\,\mathcal U_j,\{\hat V_\ell:\ell\ne j\}\right)}{1\vee |\mathcal R_{j\to 0}^*|}\right] \qquad\text{(by $|\mathcal R_{j\to 0}^*|$ is fixed)}
\end{align*}

Then, let $t=q|\mathcal R_{j\to 0}^*|/n_t$, which is measurable with respect to $\mathcal U_j$ and $\{\hat V_\ell:\ell\ne j\}$. Therefore, by \eqref{eq:exch},
\begin{align*}
\mathbb E\left[\frac{\mathbb I\{p_j^*\le q|\mathcal R_{j\to 0}^*|/n_t\}}{1\vee |\mathcal R_{j\to 0}^*|}\right]
&=\mathbb E\left[\frac{\mathbb P\left(p_j^*\le q|\mathcal R_{j\to 0}^*|/n_t\,\middle|\,\mathcal U_j,\{\hat V_\ell:\ell\ne j\}\right)}{1\vee |\mathcal R_{j\to 0}^*|}\right]\\
&\le \mathbb E\left[ \frac{q|\mathcal R_{j\to 0}^*|/n_t}{1\vee |\mathcal R_{j\to 0}^*|}\right]\le\mathbb E\left[ \frac{q}{n_t}\right]=\frac{q}{n_t}.
\end{align*}
For now, we establish \eqref{eq:med_ineq_proof}. Summing this bound over $j=1,\ldots,n_t$ yields
\begin{align*}
    \mathrm{FDR}\le \sum_{j=1}^{n_t}\frac{q}{n_t} =q,
\end{align*}
which completes the proof.

\end{proof}

\section{Additional simulation results}\label{app:sim_results}
In this section, we provide additional simulation results with the hidden bias magnitude $b=0.4$ in Figure \ref{fig:FDR_all_Yshift_0.4} and \ref{fig:Power_all_Yshift_0.4}.

\begin{figure}[!ht]
    \centering
    \includegraphics[width=\linewidth]{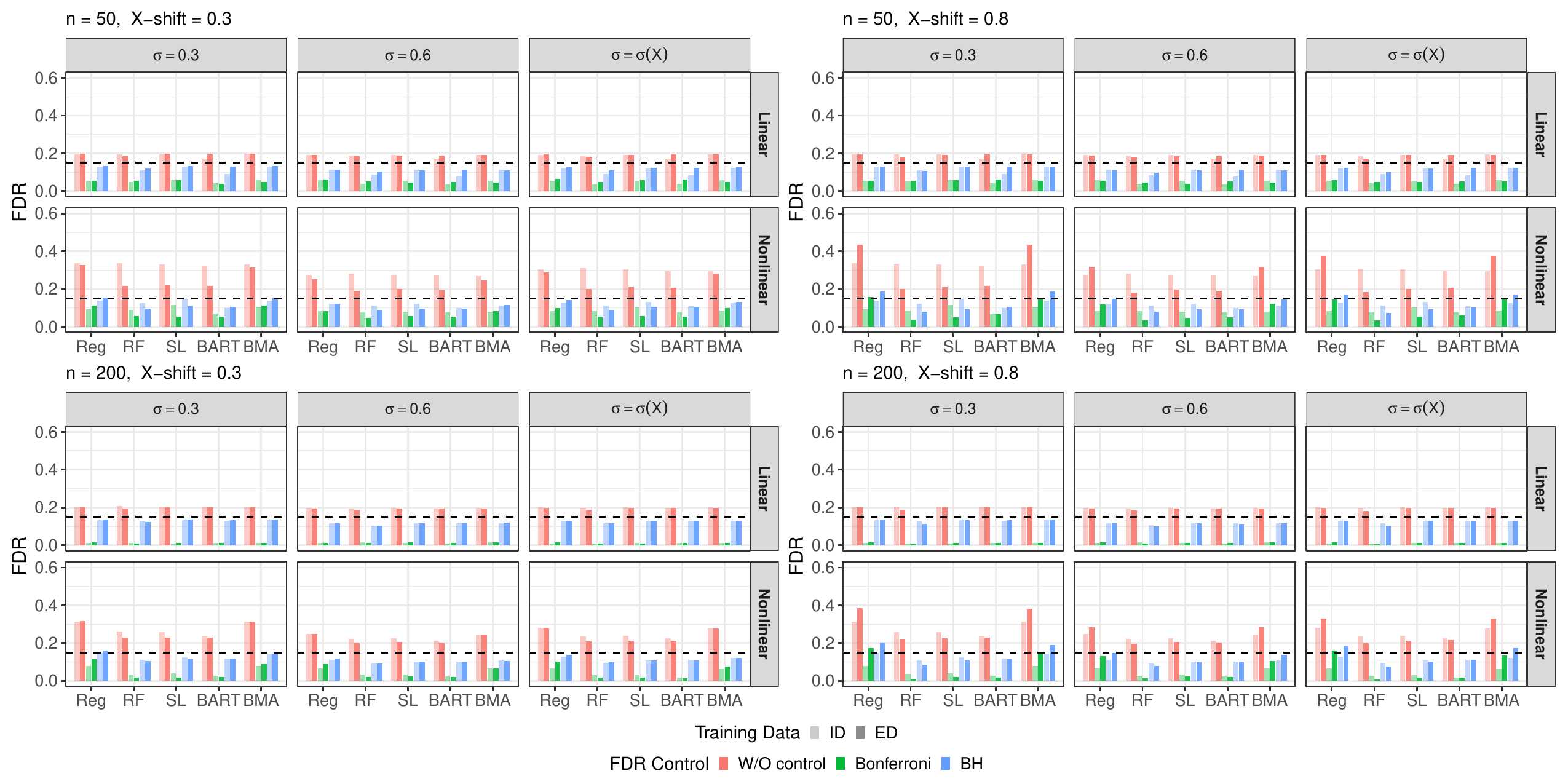}
    \caption{FDR across different scenarios given outcome drift ($b=0.4$)}
    \label{fig:FDR_all_Yshift_0.4}
\end{figure}
\begin{figure}[!ht]
    \centering
    \includegraphics[width=\linewidth]{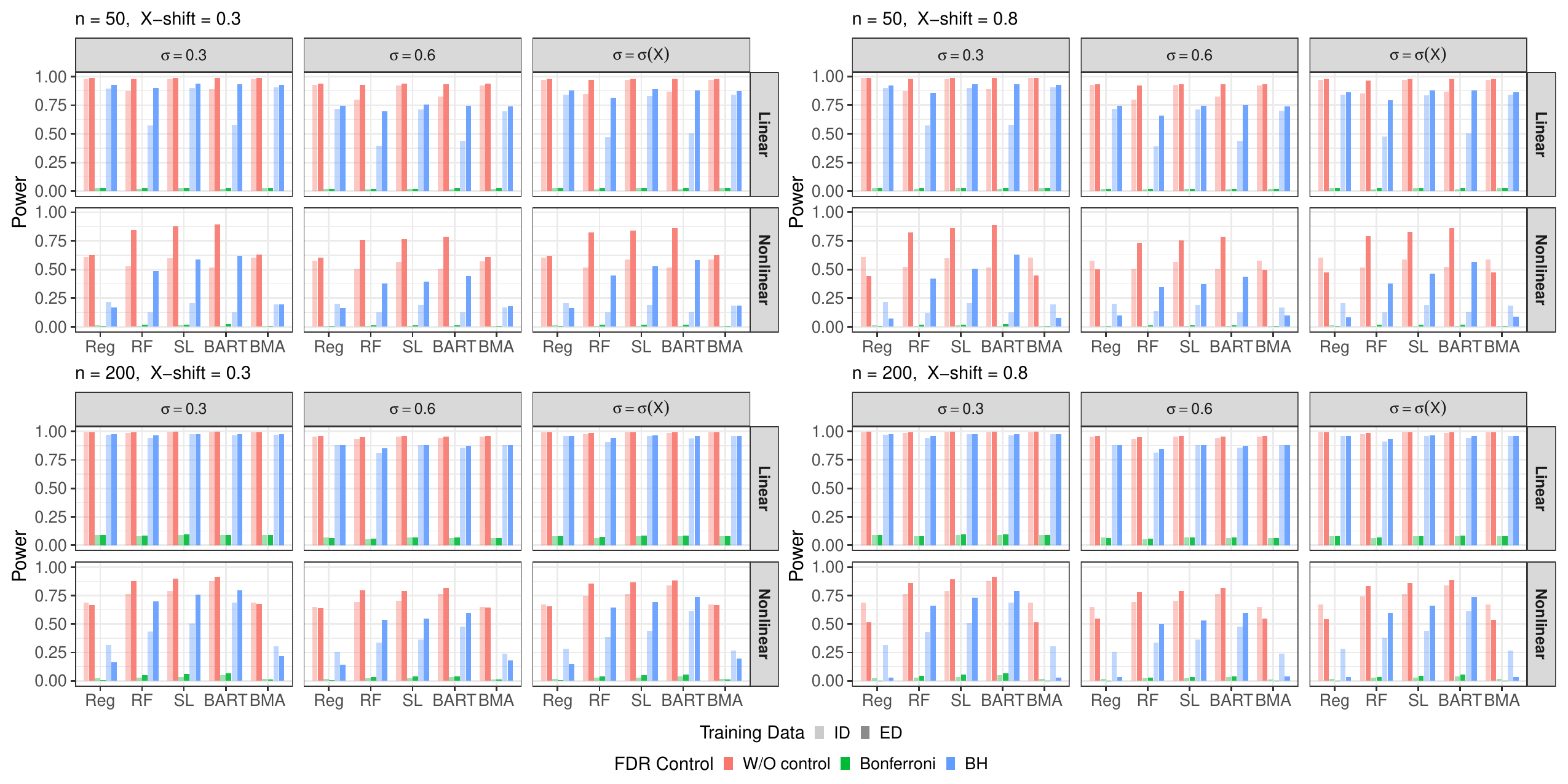}
    \caption{Power across different scenarios given outcome drift ($b=0.4$)}
    \label{fig:Power_all_Yshift_0.4}
\end{figure}

\section{Additional case study results}\label{app:case_results}
In this section, we additionally present the case study results using BART, which has demonstrated strong performance and robustness in our simulation studies. Figure~\ref{fig:heatmap_bart_0.35} shows that when only RCT data are used for model training, the estimated pattern suggests that patients around age 80 are more likely to benefit from limited resection, with a relatively abrupt change in the surface. In contrast, when external RWD is incorporated into model training, the estimated pattern becomes noticeably smoother, revealing a more gradual gradient with respect to age rather than a discrete jump observed under the ID approach. This smoother trend suggests that the benefit of limited resection increases progressively with age.

Across both ID and ED approaches, there is consistent evidence that patients with smaller tumor sizes are more likely to benefit from limited resection. Importantly, incorporating external data through ED helps stabilize the estimation and recover a more interpretable and clinically plausible pattern. The resulting conclusions are more aligned with prior findings in the literature \citep{Altorki2023,Mao2025pub}. Figure~\ref{fig:sep_bart} shows a pattern similar to that observed with the Super Learner, further illustrating that incorporating external data can enhance the ability to distinguish signal from noise near the decision boundary. Therefore, there is an improved separation between treatment-benefiting subgroups.

\begin{figure}[ht]
    \centering
    \includegraphics[width=\linewidth]{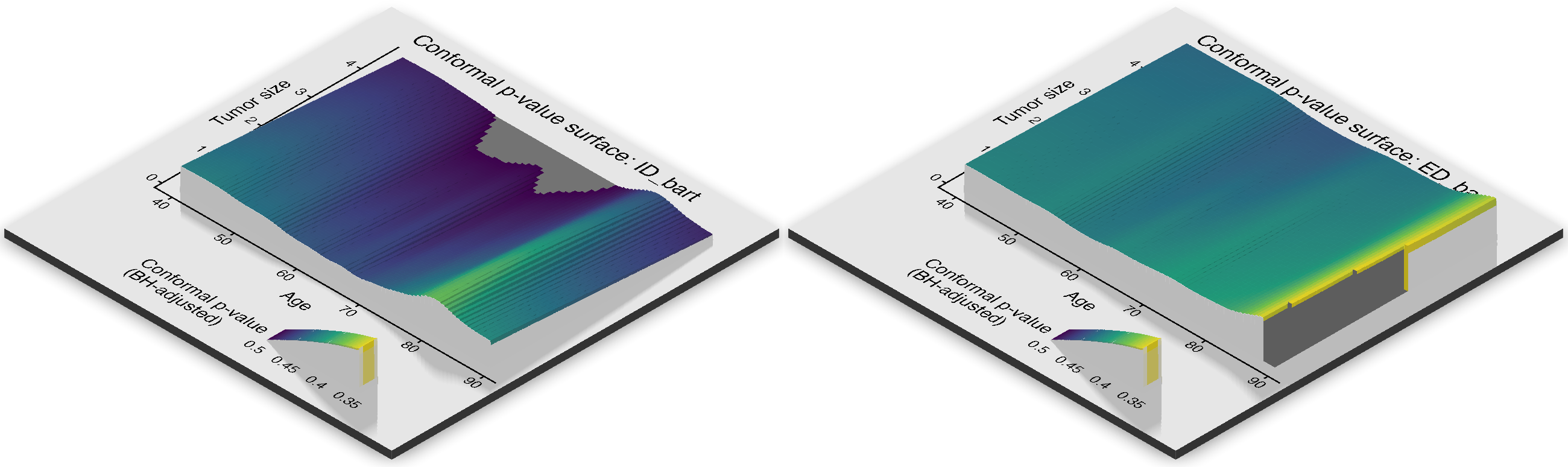}
    \caption{Comparison of conformal $p$-value surfaces under ID and ED when limited resection is treatment of interest and BART is base learner}
    \label{fig:heatmap_bart_0.35}
\end{figure}

\begin{figure}[ht]
    \centering
    \includegraphics[width=\linewidth]{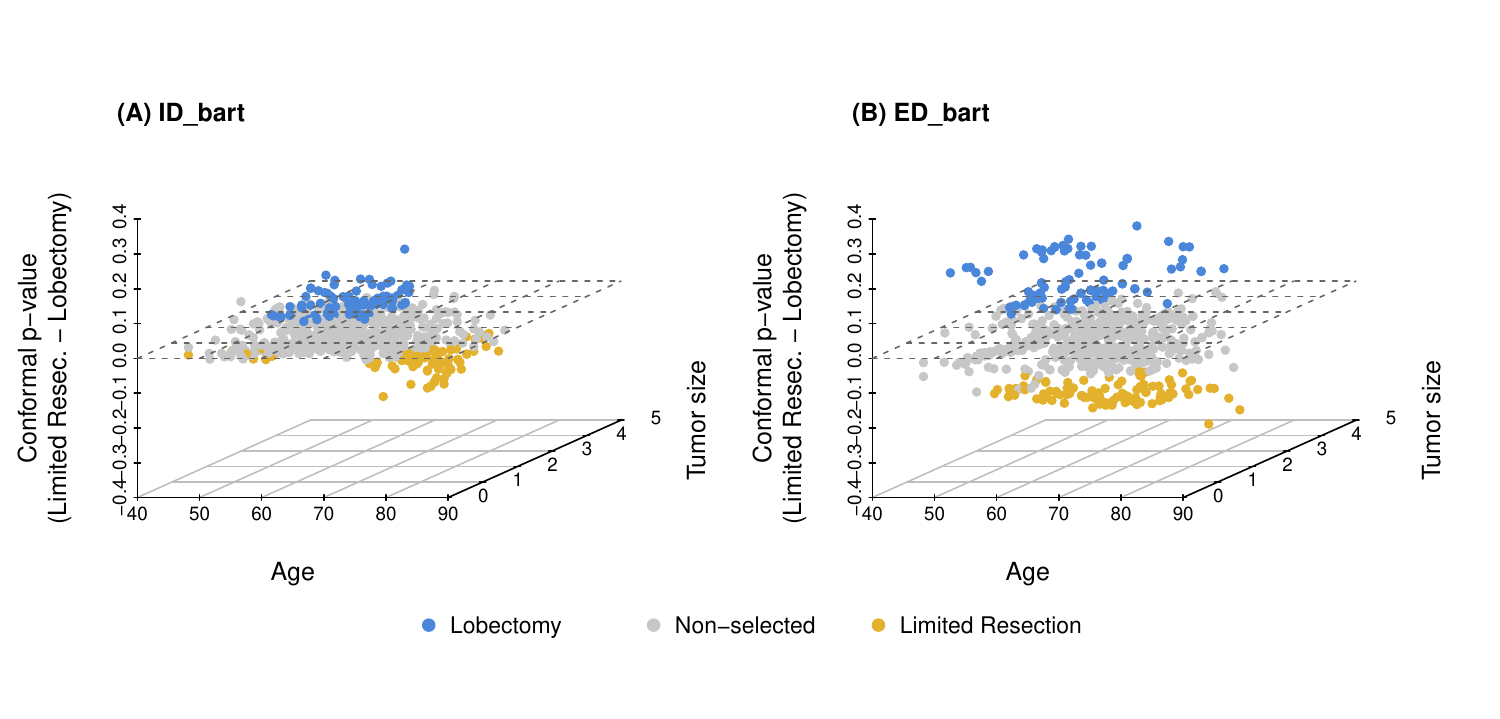}
    \caption{Distribution of conformal $p$-values regarding age and tumor sizes when using BART as the base learner}
    \label{fig:sep_bart}
\end{figure}

\section{Empirical assessment of stochastic ordering}\label{app:empirical_stoch}
A key condition for using pseudo-outcome-based conformal inference for individual treatment effects is that the conformal scores computed from the pseudo-outcomes should be conservative relative to the oracle conformal scores that would be obtained if the true individual treatment effects were observed. Following the stochastic-ordering framework of \citep{alaa2024conformal}, this means that the pseudo-outcome conformal scores should be stochastically larger, or more variable in an appropriate sense, than the oracle scores. Intuitively, if the pseudo-score distribution has a larger upper quantile than the oracle-score distribution, then the conformal threshold calibrated using pseudo-outcomes will be at least as conservative as the threshold that would have been calibrated using the unobserved oracle ITEs.

\subsection{Assessment under the main simulation settings}\label{app:empirical_stoch_main}
Here, we provide an empirical assessment of this stochastic-ordering condition across our simulation settings, where the outcome is continuous. This assessment is not used as part of the proposed procedure; rather, it serves as a diagnostic check to understand whether the pseudo-outcome transformation behaves as expected in the settings considered in our simulations.

\subsubsection{Clip conformal score}\label{app:empirical_stoch_main_clip}
For each simulation scenario, we compare the empirical CDFs of two types of calibration scores. The first is the pseudo-outcome-based score, $V_i' = M\mathbb{I}\{Y_i'>0\} - \hat{\tau}(X_i),$ where $Y_i'$ denotes the DR-learner pseudo-ITE, $\hat{\tau}(X_i)$ is the fitted ITE model, and $M$ is a sufficiently large constant satisfying $M > 2\sup_x |\hat\tau(x)|$. The second is the oracle score, $V_i^* = M\mathbb{I}\{(Y_i(1)-Y_i(0))>0\}-\hat{\tau}(X_i),$ which is only available in simulation because the true potential outcomes are known. For each iteration, we compute the empirical CDFs of $V_i'$ and $V_i^*$ using the same calibration subjects. We then average these empirical CDFs over 200 simulation iterations to obtain a stable visualization of the score distributions.

Figures \ref{fig:ECDF_Stochastic_Ordering_clip_n50_etaj0.3} - \ref{fig:ECDF_Stochastic_Ordering_clip_n200_etaj0.8} present the resulting empirical CDFs under different sample sizes and levels of covariate shift. Overall, the oracle-score CDF is generally above, or shifted to the left of, the pseudo-score CDF. This means that the oracle scores accumulate more quickly at smaller score values, while the pseudo-outcome scores tend to have larger upper quantiles. This is consistent with the conservative direction for conformal calibration that \cite{alaa2024conformal} validated. The threshold computed from pseudo-ITE scores is expected to be at least as large as the infeasible oracle threshold, which preserves validity after replacing the unobserved ITE with a pseudo-ITE.

The degree of separation between the two CDF curves also provides useful information about efficiency. When the pseudo-score CDF is close to the oracle-score CDF, the pseudo-outcome transformation approximates the oracle ITE well. When the pseudo-score CDF is much lower than the oracle-score CDF, the pseudo-outcomes induce more variability in the conformal scores and might lead to lower power in beneficiary identification. This finding is consistent with \cite{alaa2024conformal}, where the empirical gap between pseudo-score and oracle-score CDFs can explain differences in coverage and efficiency across conformal meta-learners. In addition, the empirical assessment is also aligned with the main simulation results. Base learners with stronger prediction performance, i.e., higher power with FDR controlled, such as Super Learner, tend to have pseudo-score distributions closely follow the oracle-score distribution. In contrast, when the pseudo-score distribution is more separated from the oracle distribution, the procedure remains conservative but selects fewer true beneficiaries.

\begin{figure}
    \centering
    \includegraphics[width=\linewidth]{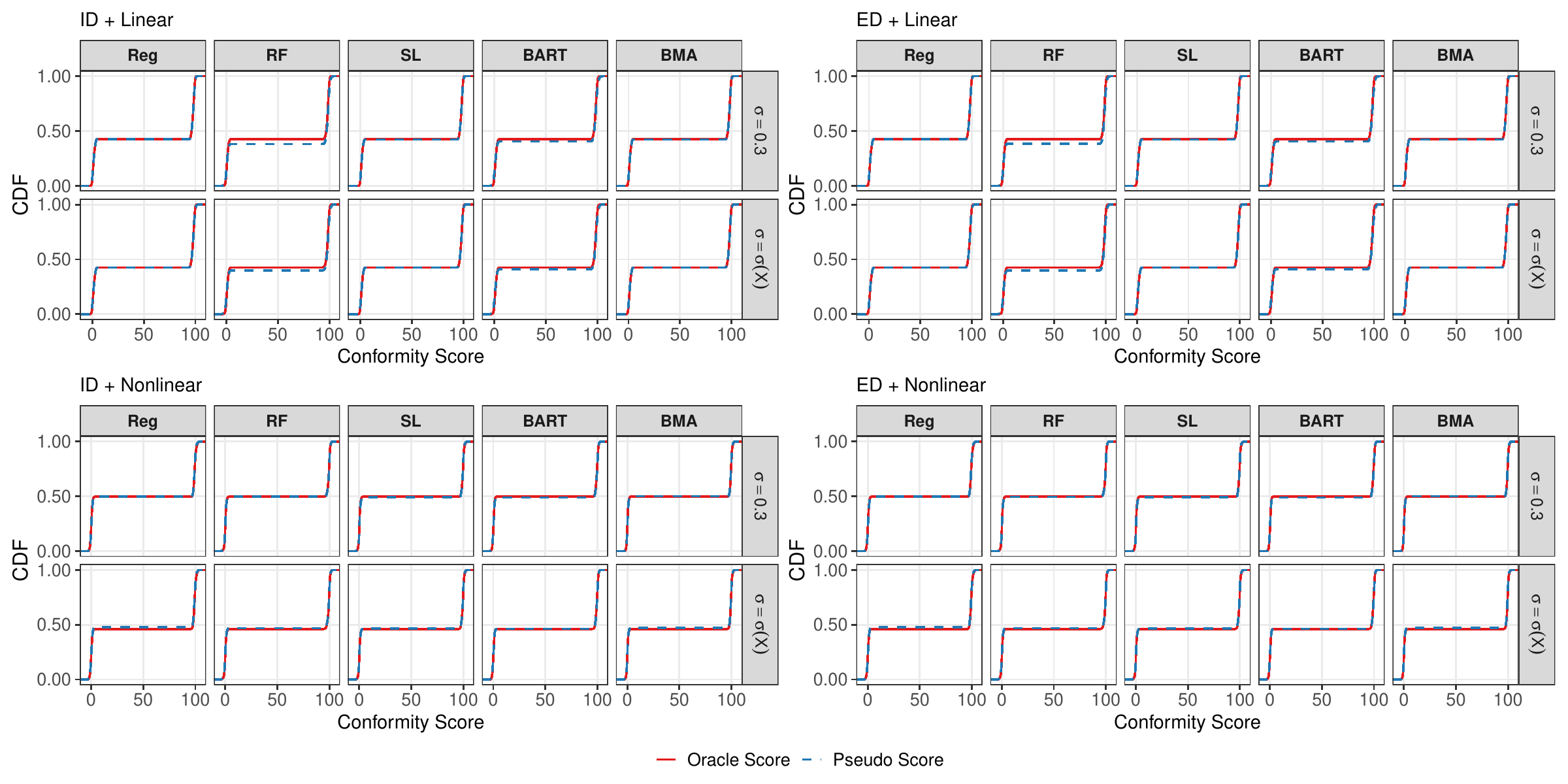}
    \caption{Empirical CDFs of pseudo-outcome and oracle clip conformal scores under the main simulation setting with $n=50$ and covariate shift $\eta=0.3$}
    \label{fig:ECDF_Stochastic_Ordering_clip_n50_etaj0.3}
\end{figure}
\begin{figure}
    \centering
    \includegraphics[width=\linewidth]{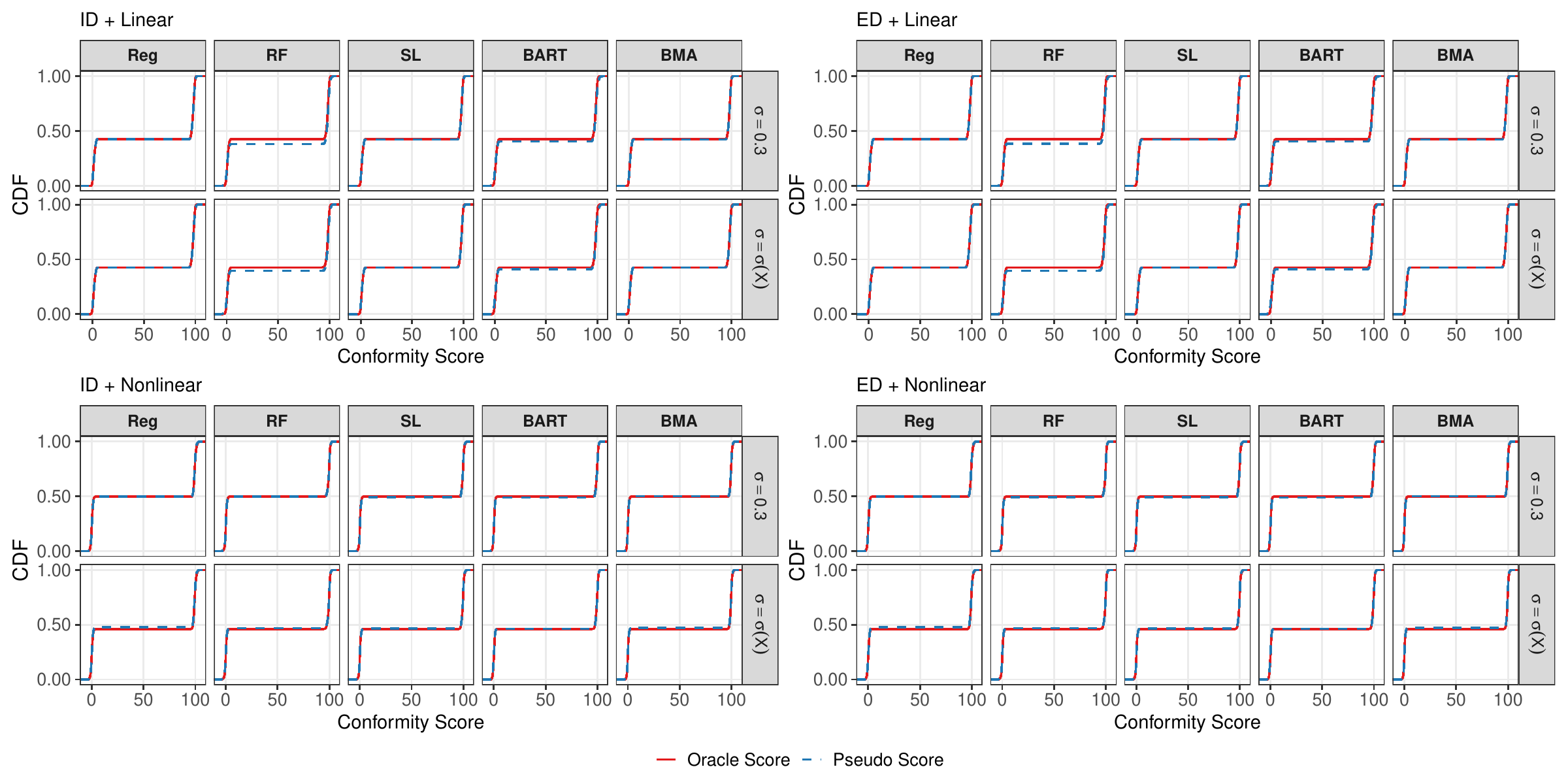}
    \caption{Empirical CDFs of pseudo-outcome and oracle clip conformal scores under the main simulation setting with $n=50$ and covariate shift $\eta=0.8$}
    \label{fig:ECDF_Stochastic_Ordering_clip_n50_etaj0.8}
\end{figure}
\begin{figure}
    \centering
    \includegraphics[width=\linewidth]{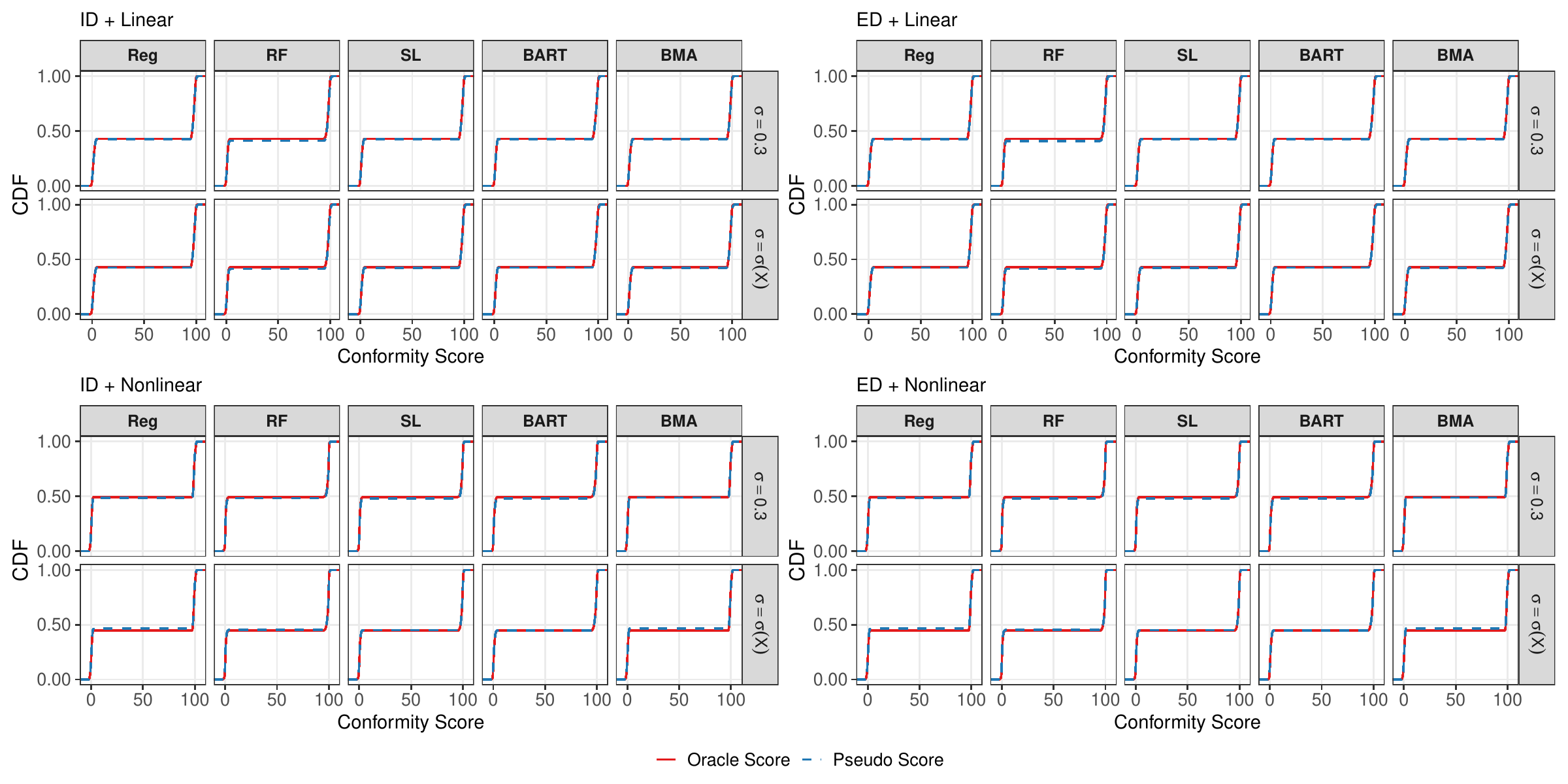}
    \caption{Empirical CDFs of pseudo-outcome and oracle clip conformal scores under the main simulation setting with $n=200$ and covariate shift $\eta=0.3$}
    \label{fig:ECDF_Stochastic_Ordering_clip_n200_etaj0.3}
\end{figure}
\begin{figure}
    \centering
    \includegraphics[width=\linewidth]{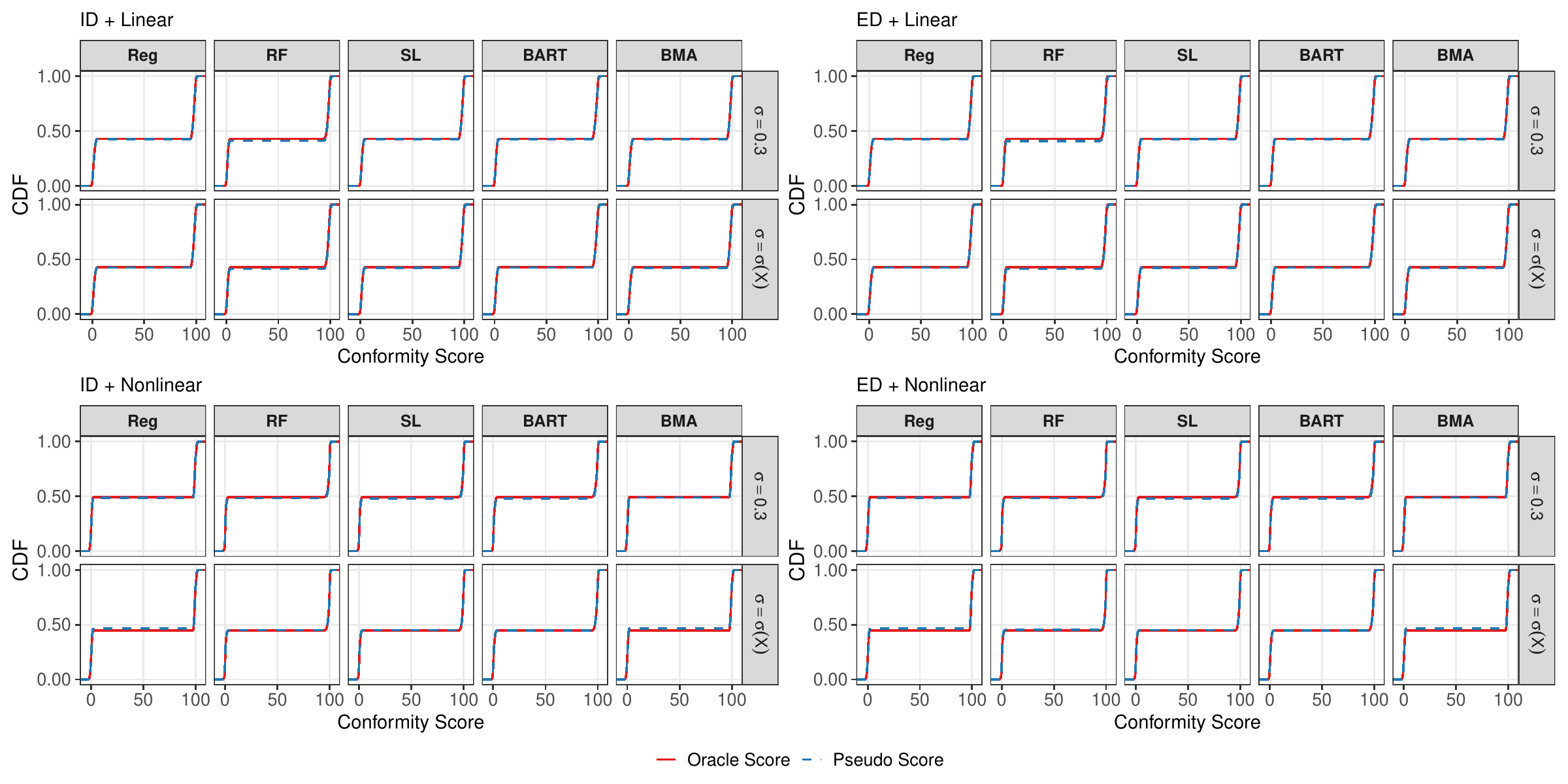}
    \caption{Empirical CDFs of pseudo-outcome and oracle clip conformal scores under the main simulation setting with $n=200$ and covariate shift $\eta=0.8$}
    \label{fig:ECDF_Stochastic_Ordering_clip_n200_etaj0.8}
\end{figure}

\subsubsection{Residual conformal score}\label{app:empirical_stoch_main_res}
We further examine the empirical CDFs when using the residual conformal score. The pseudo-outcome-based score is defined as $V_i' = Y_i' - \hat{\tau}(X_i)$, and the oracle score is $ V_i^* = Y_i(1)-Y_i(0)-\hat{\tau}(X_i)$. 

Figures \ref{fig:ECDF_Stochastic_Ordering_res_n50_etaj0.3} - \ref{fig:ECDF_Stochastic_Ordering_res_n200_etaj0.8} illustrate the empirical distributions of the residual conformal scores across different simulation settings. Overall, the pseudo-score and oracle-score CDFs are very close. In some settings, the two curves may cross slightly when using Random Forest and BART as base learners. This pattern is consistent with the stochastic-ordering results in \citep{alaa2024conformal}. For residual-type signed-distance scores, the DR-learner is associated with a weaker second-order stochastic ordering, rather than a strict first-order dominance pattern. Therefore, together with the clip-score results, this supports the use of both score choices in our framework when the ITE is the prediction target.

\begin{figure}
    \centering
    \includegraphics[width=\linewidth]{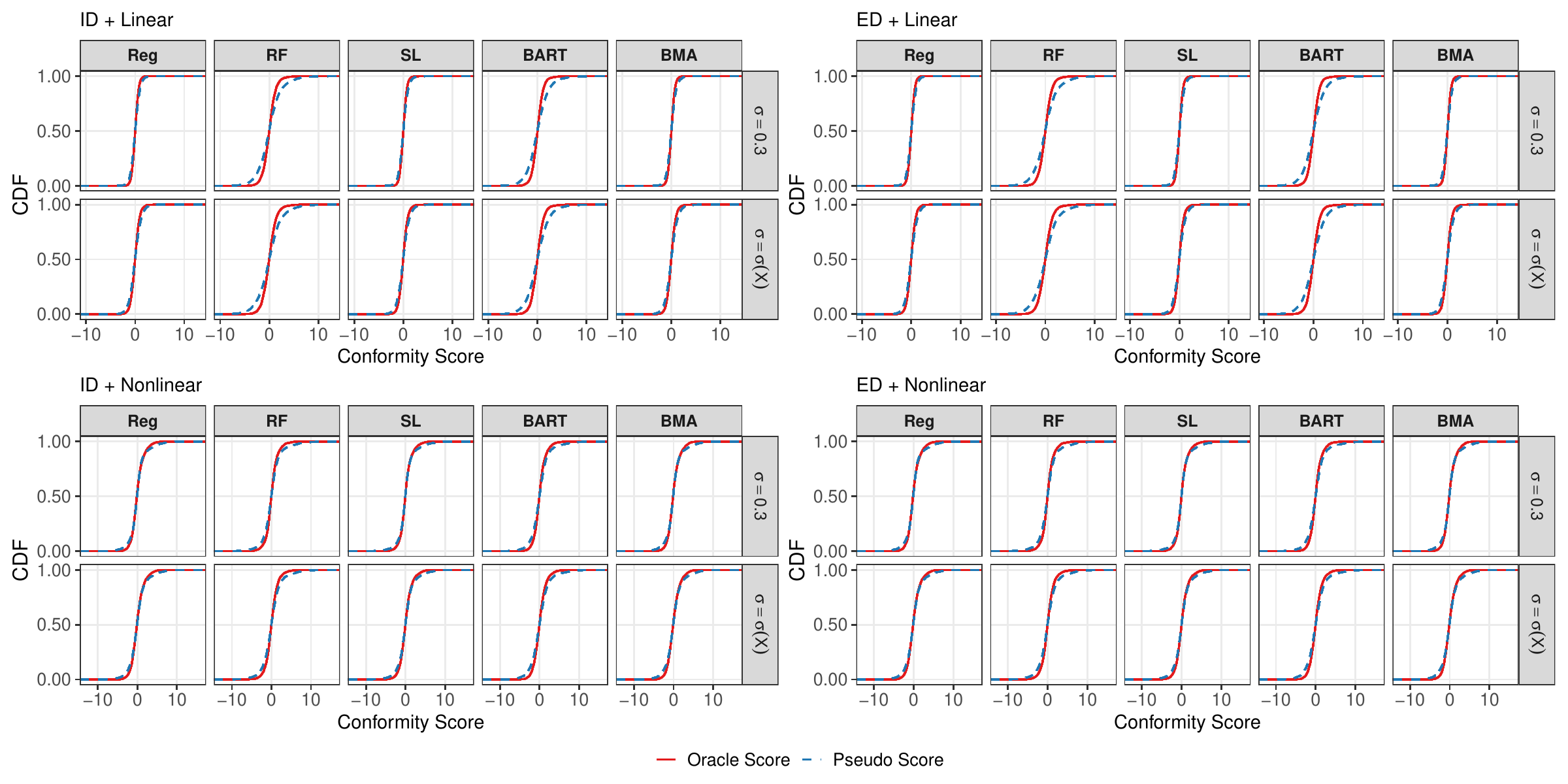}
    \caption{Empirical CDFs of pseudo-outcome and oracle residual conformal scores under the main simulation setting with $n=50$ and covariate shift $\eta=0.3$}
    \label{fig:ECDF_Stochastic_Ordering_res_n50_etaj0.3}
\end{figure}
\begin{figure}
    \centering
    \includegraphics[width=\linewidth]{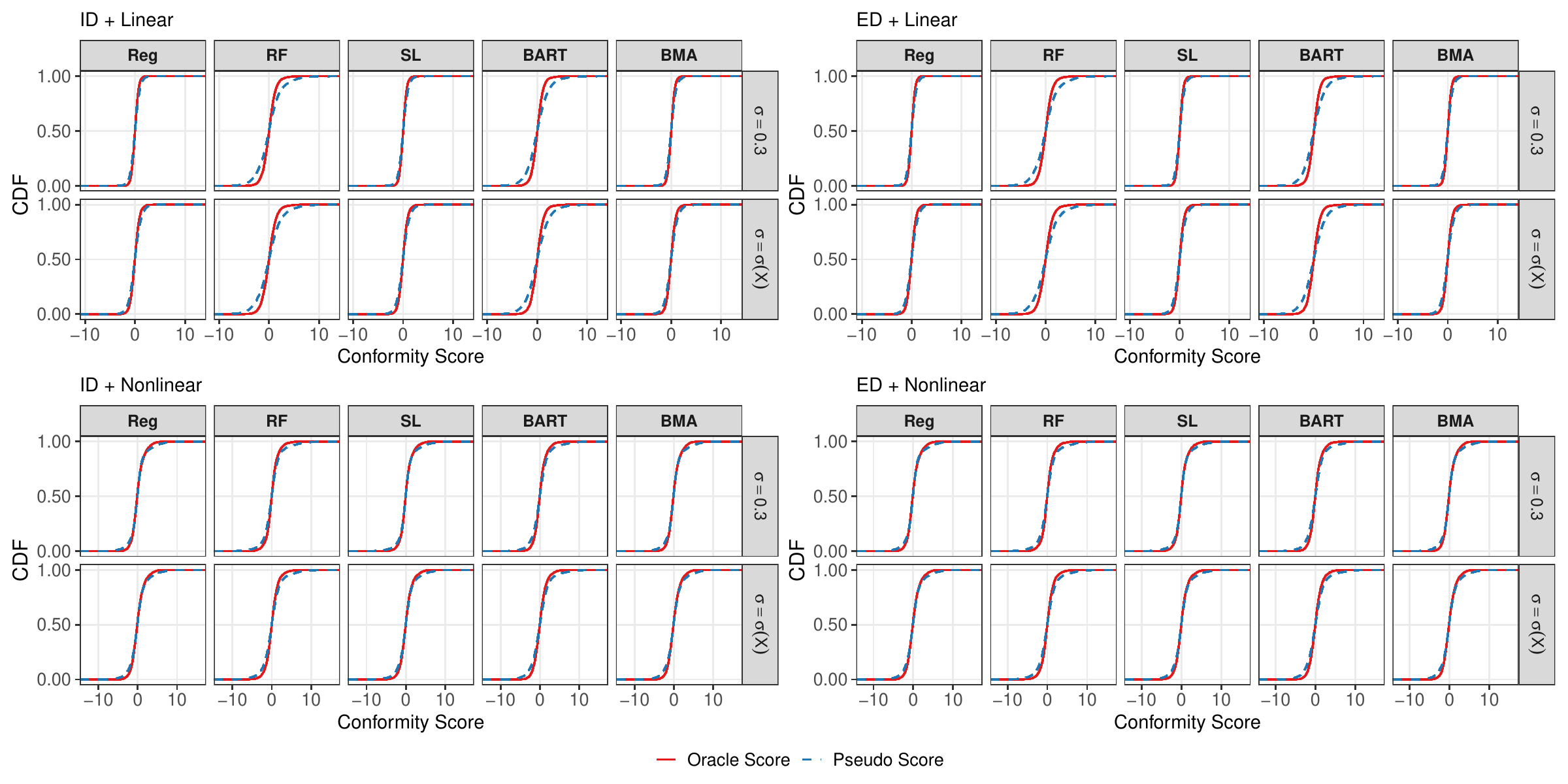}
    \caption{Empirical CDFs of pseudo-outcome and oracle residual conformal scores under the main simulation setting with $n=50$ and covariate shift $\eta=0.8$}
    \label{fig:ECDF_Stochastic_Ordering_res_n50_etaj0.8}
\end{figure}
\begin{figure}
    \centering
    \includegraphics[width=\linewidth]{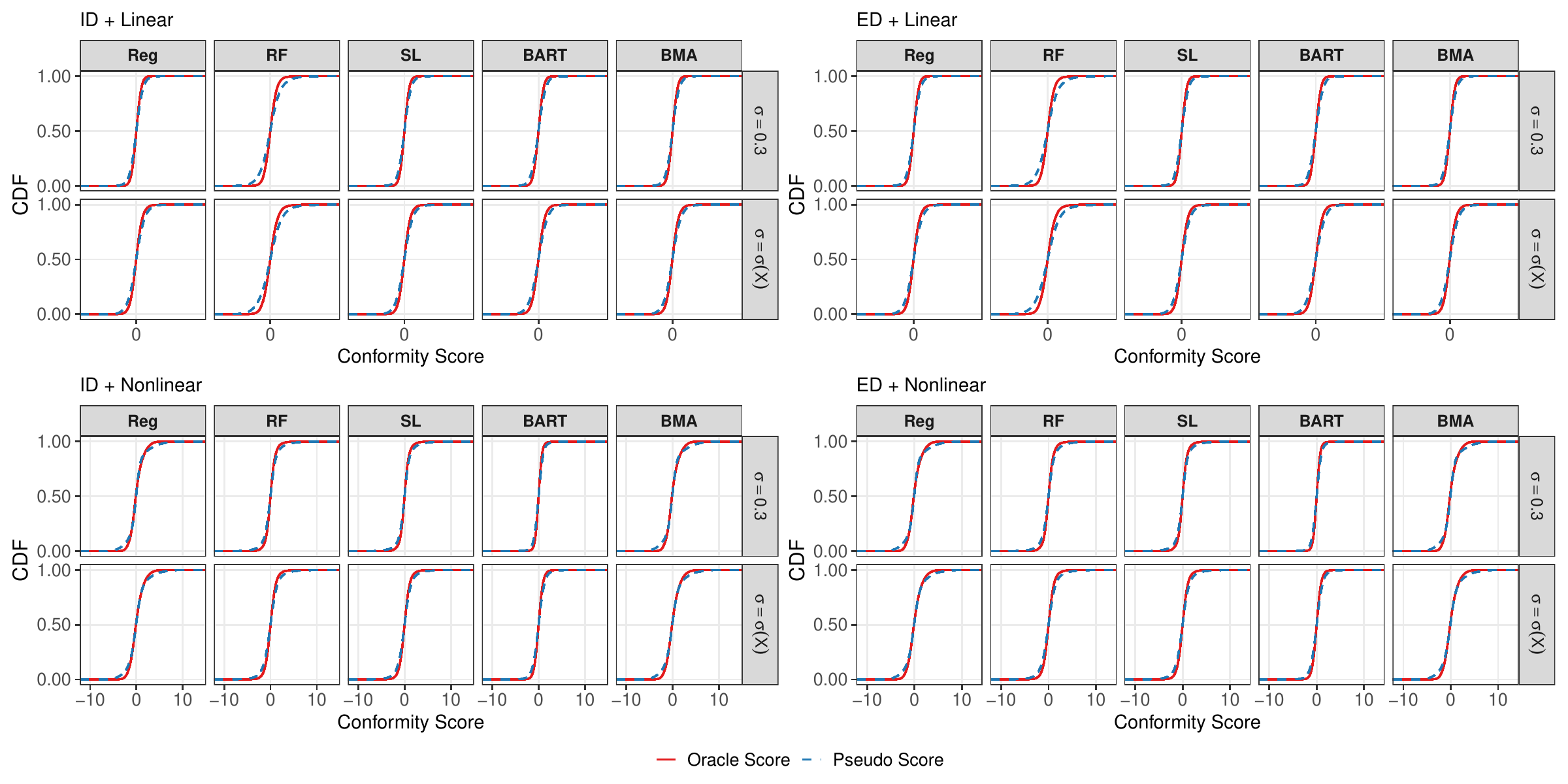}
    \caption{Empirical CDFs of pseudo-outcome and oracle residual conformal scores under the main simulation setting with $n=200$ and covariate shift $\eta=0.3$}
    \label{fig:ECDF_Stochastic_Ordering_res_n200_etaj0.3}
\end{figure}
\begin{figure}
    \centering
    \includegraphics[width=\linewidth]{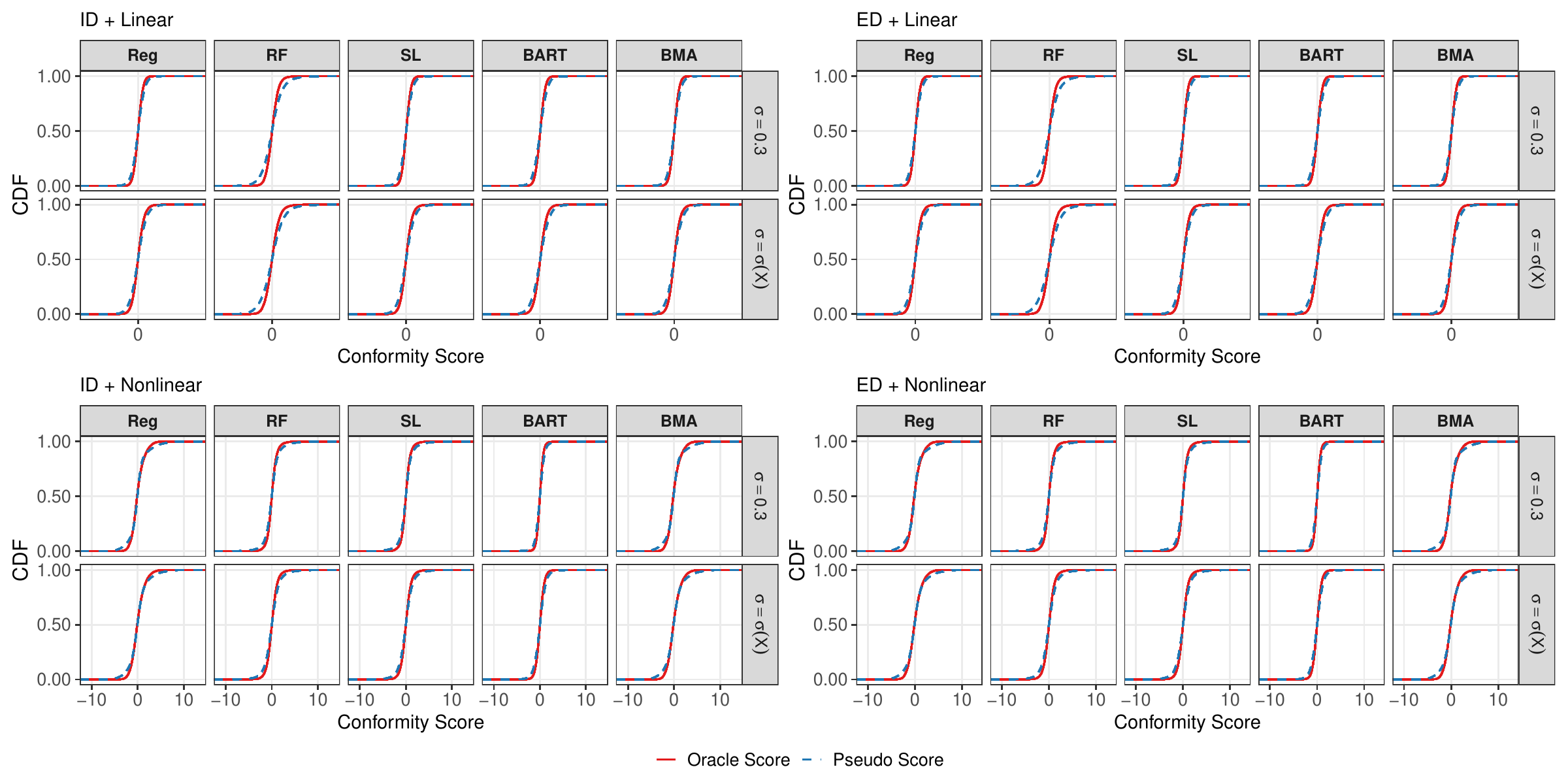}
    \caption{Empirical CDFs of pseudo-outcome and oracle residual conformal scores under the main simulation setting with $n=200$ and covariate shift $\eta=0.8$}
    \label{fig:ECDF_Stochastic_Ordering_res_n200_etaj0.8}
\end{figure}

\subsection{Assessment under case-study-like simulation settings}\label{app:empirical_stoch_caseStudy}
In this section, we assess the empirical CDFs of conformal scores under a case-study-like simulation setting. Since the true ITEs are not observed in the real case study, the oracle conformal scores cannot be directly computed from the real data. To make this comparison possible, we generated simulated datasets that mimic the main features of the case study, including the covariate distribution, outcome pattern, and censoring rate. This allows us to evaluate whether the pseudo-outcome transformation behaves similarly to the oracle ITE in a setting closer to the real-data application. In the case study, the outcome is subject to censoring. We therefore first transform the survival outcome into pseudo-RMST, so that the transformed outcome can be analyzed as a continuous outcome. We then follow the same empirical CDF assessment used in the previous section. Specifically, we compare the pseudo-outcome-based conformal scores with the oracle conformal scores computed from the known simulated potential outcomes. 

Figure \ref{fig:ECDF_Stochastic_Ordering_clip_case_study} shows the results when using the clip conformal score. The pseudo-score CDFs almost overlap with the oracle-score CDFs, suggesting that the pseudo-RMST transformation provides a close approximation to the oracle ITE score distribution in this case-study-like setting. When using the residual conformal score in Figure \ref{fig:ECDF_Stochastic_Ordering_res_case_study}, the CDF curves show second-order stochastic ordering pattern. This is consistent with the findings in \cite{alaa2024conformal}, where the CDF curves may cross oracle CDF curve is higher than pseudo CDF curve among some conformal scores. Overall, these case-study-like simulations suggest that the pseudo-RMST transformation does not substantially distort the conformal score distribution.

\begin{figure}
    \centering
    \includegraphics[width=\linewidth]{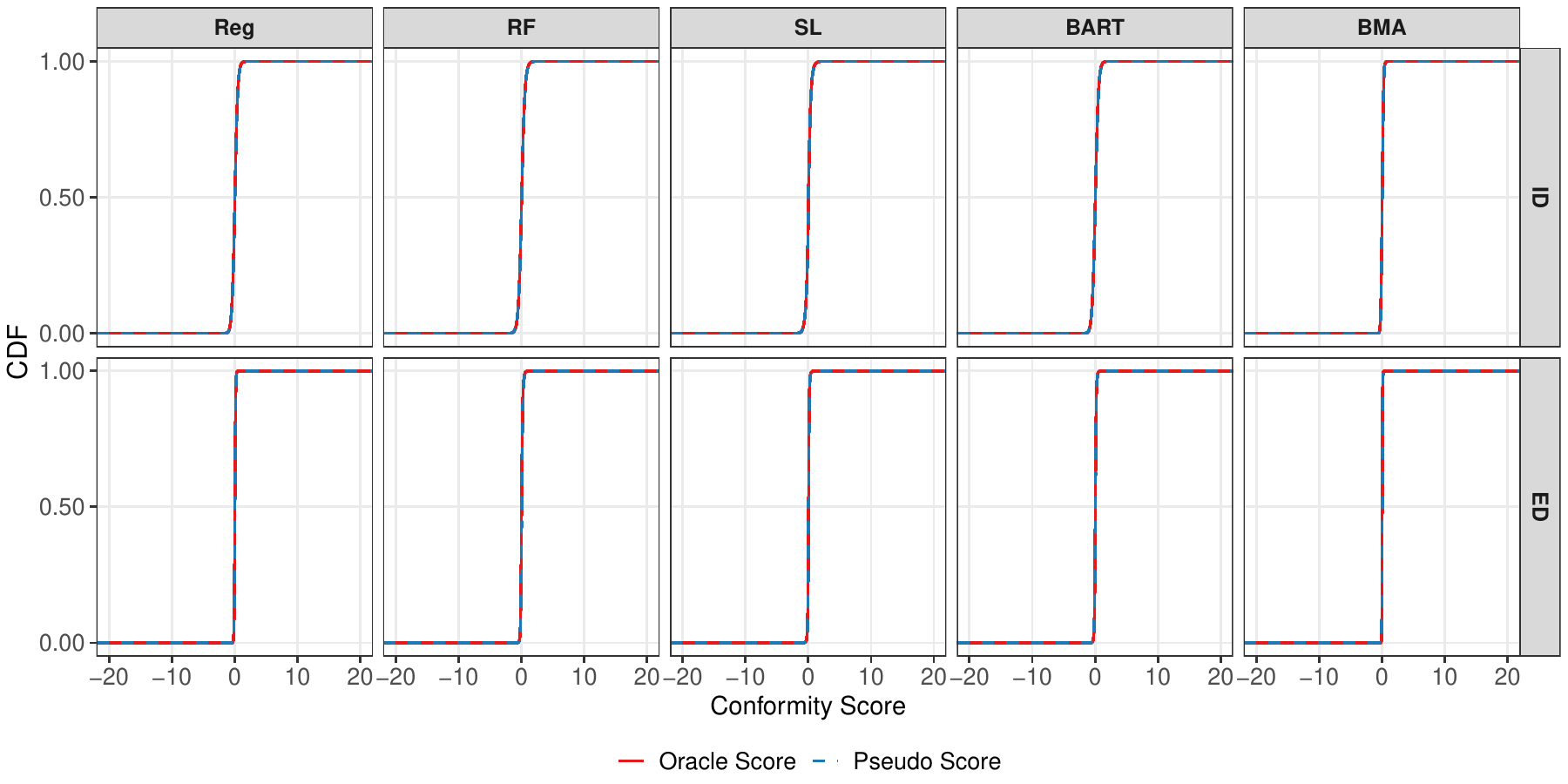}
    \caption{Empirical CDFs of pseudo-outcome and oracle clip conformal scores under the case study setting}
    \label{fig:ECDF_Stochastic_Ordering_clip_case_study}
\end{figure}

\begin{figure}
    \centering
    \includegraphics[width=\linewidth]{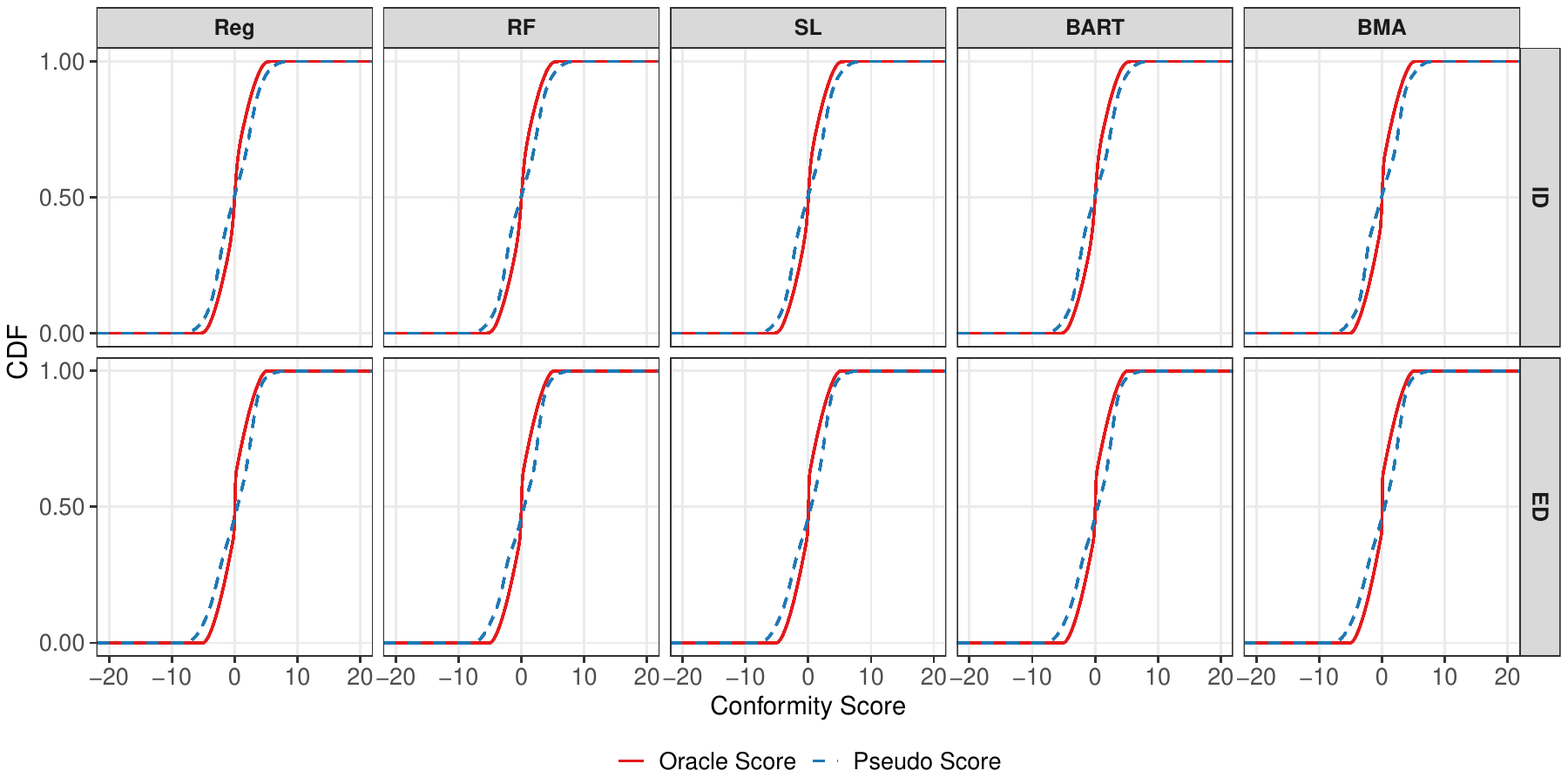}
    \caption{Empirical CDFs of pseudo-outcome and oracle residual conformal scores under the case study setting}
    \label{fig:ECDF_Stochastic_Ordering_res_case_study}
\end{figure}

\section{Pseudo-code for weighted conformal}\label{app:weighted_conformal}
In Algorithm \ref{alg:weighted_conformal_screening}, we provide pseudo-code for incorporating weighted conformal inference into our framework, with the goal of extending beneficiary identification to a new candidate population under covariate shift. The key idea is to reweight the calibration scores toward the target candidate population \citep{tibshirani2019conformal}. 
\begin{algorithm}[ht]
\small
\caption{Weighted conformal screening under covariate shift}
\label{alg:weighted_conformal_screening}
\KwData{
$\mathcal D_{\rm RCT}$; optional $\mathcal D_{\rm RWD}$; target covariates
$\mathcal D_{\rm Test}=\{X_{n+m+j}\}_{j=1}^{n_t}$; thresholds $\{c_j\}_{j=1}^{n_t}$;
FDR level $q$; monotone score $V(x,y)$.
}
\KwResult{Selected beneficiary set $\mathcal R$.}

Split $\mathcal D_{\rm RCT}$ into $\mathcal R_{\rm Train}$ and $\mathcal D_{\rm Calib}$.\;

Let $\mathcal D_{\rm Train}=\mathcal R_{\rm Train}\cup \mathcal D_{\rm RWD}$. Construct pseudo-outcomes
$Y_i'$ using a DR-learner, and fit $\hat\tau(x)$ by regressing $Y'$ on $X$ in $\mathcal D_{\rm Train}$.\;

Pool covariates from $\mathcal D_{\rm Calib}$ and $\mathcal D_{\rm Test}$. Define $S=0$ for calibration subjects and $S=1$ for target candidates. Fit a sampling model $S\sim X$ and set
\[
\hat p(x)=\Pr(S=1\mid X=x),\qquad
\hat w(x)=\frac{\hat p(x)}{1-\hat p(x)}.
\]
Here, $\hat w(x)$ is the individual weight to address covariate shift between the target candidate population and the RCT calibration population\;

Compute calibration scores and candidate null-imputed scores:
\[
V_i=V(X_i,Y_i'),\ i\in\mathcal D_{\rm Calib},
\qquad
\hat V_j=V(X_{n+m+j},c_j),\ j=1,\ldots,n_t.
\]
The score $\hat V_j$ is computed under the null boundary $H_{0j}:\tau(X_{n+m+j})\le c_j$\;

For each candidate $j$, define the target-population-weighted conformal probabilities
\[
\pi_i^{(j)}
=
\frac{\hat w(X_i)}
{\sum_{\ell\in\mathcal D_{\rm Calib}}\hat w(X_\ell)+\hat w(X_{n+m+j})},
\quad
\pi_j^{\rm test}
=
\frac{\hat w(X_{n+m+j})}
{\sum_{\ell\in\mathcal D_{\rm Calib}}\hat w(X_\ell)+\hat w(X_{n+m+j})}.
\]

Compute the weighted conformal $p$-value
\[
p_j^w
=
\sum_{i\in\mathcal D_{\rm Calib}}\pi_i^{(j)}\mathbb{I}(V_i<\hat V_j)
+
U_j\left\{\pi_j^{\rm test}
+
\sum_{i\in\mathcal D_{\rm Calib}}\pi_i^{(j)}\mathbb{I}(V_i=\hat V_j)
\right\},
\quad U_j\sim{\rm Unif}(0,1).
\]
This provides a weighted distribution that better represents the target candidate population\;

Apply BH to $\{p_j^w\}_{j=1}^{n_t}$:
\[
k^\ast=\max\left\{k:p_{(k)}^w\le \frac{qk}{n_t}\right\},
\quad
\mathcal R=\left\{j:p_j^w\le \frac{qk^\ast}{n_t}\right\}.
\]
\end{algorithm}

\end{document}